\def\showauthornotes{1}
\def\showkeys{0}
\def\showdraftbox{0}
\def\showcolorlinks{1}
\def\usemicrotype{0}
\def\showfixme{0}
\newtheorem{theorem}{Theorem}[section]
\newtheorem*{theorem*}{Theorem}
\newtheorem*{proposition*}{Proposition}
\newtheorem{lemma}[theorem]{Lemma}
\newtheorem*{lemma*}{Lemma}
\newtheorem*{conjecture*}{Conjecture}
\newtheorem{fact}[theorem]{Fact}
\newtheorem*{fact*}{Fact}
\newtheorem*{hypothesis*}{Hypothesis}
\theoremstyle{definition}
\newtheorem{definition}[theorem]{Definition}
\newtheorem*{definition*}{Definition}
\newtheorem*{problem*}{Problem}
\theoremstyle{remark}
\newtheorem*{claim*}{Claim}
\newtheorem{remark}[theorem]{Remark}
\newtheorem*{remark*}{Remark}
\newtheorem*{observation*}{Observation}
\let\mathbb\varmathbb
\newcommand{\pref}{\prettyref}
\newcommand{\savehyperref}[2]{\texorpdfstring{\hyperref[#1]{#2}}{#2}}
\newcommand{\Sref}[1]{\hyperref[#1]{\S\ref*{#1}}}
\newcommand{\Authornote}[2]{{\sffamily\small\color{red}{[#1: #2]}}}
\newcommand{\Authornotecolored}[3]{{\sffamily\small\color{#1}{[#2: #3]}}}
\newcommand{\Authorcomment}[2]{{\sffamily\small\color{gray}{[#1: #2]}}}
\newcommand{\Authorstartcomment}[1]{\sffamily\small\color{gray}[#1: }
\newcommand{\Authorfnote}[2]{\footnote{\color{red}{#1: #2}}}
\newcommand{\Authorfixme}[1]{\Authornote{#1}{\textbf{??}}}
\newcommand{\Authormarginmark}[1]{\marginpar{\textcolor{red}{\fbox{\Large #1:!}}}}
\newcommand{\Authornote}[2]{}
\newcommand{\Authornotecolored}[3]{}
\newcommand{\Authorcomment}[2]{}
\newcommand{\Authorstartcomment}[1]{}
\newcommand{\Authorfnote}[2]{}
\newcommand{\Authorfixme}[1]{}
\newcommand{\Authormarginmark}[1]{}
\newcommand{\norm}[1]{\lVert#1\rVert}
\newcommand{\Norm}[1]{\left\lVert#1\right\rVert}
\newcommand{\iprod}[1]{\langle#1\rangle}
\newcommand{\Esymb}{\mathbb{E}}
\newcommand{\Psymb}{\mathbb{P}}
\DeclareMathOperator*{\E}{\Esymb}
\DeclareMathOperator*{\ProbOp}{\Psymb}
\DeclareMathOperator*{\pE}{{\tilde\Esymb}}
\renewcommand{\Pr}{\ProbOp}
\DeclareMathOperator*{\argmin}{argmin}
\newcommand{\textparen}[1]{\text{(#1)}}
\newcommand{\because}[1]{\textparen{because #1}}
\renewcommand{\because}[1]{\textparen{because #1}}
\newcommand{\defeq}{\stackrel{\mathrm{def}}=}
\newcommand{\seteq}{\mathrel{\mathop:}=}
\newcommand{\mper}{\,.}
\newcommand{\mcom}{\,,}
\newcommand\bdot\bullet
\DeclareMathOperator{\Span}{Span}
\DeclareMathOperator{\Id}{\mathrm{Id}}
\DeclareMathOperator{\Tr}{Tr}
\newcommand{\etal}{et al.\xspace}
\newcommand{\Z}{\mathbb Z}
\newcommand{\N}{\mathbb N}
\newcommand{\R}{\mathbb R}
\newcommand{\cA}{\mathcal A}
\newcommand{\cB}{\mathcal B}
\newcommand{\cD}{\mathcal D}
\newcommand{\cN}{\mathcal N}
\newcommand{\cP}{\mathcal P}
\newcommand{\cQ}{\mathcal Q}
\newcommand{\cS}{\mathcal S}
\newcommand{\cV}{\mathcal V}
\let\epsilon=\varepsilon
\numberwithin{equation}{section}
\newcommand\MYcurrentlabel{xxx}
\newcommand{\MYstore}[2]{%
  \global\expandafter \def \csname MYMEMORY #1 \endcsname{#2}%
}
\newcommand{\MYload}[1]{%
  \csname MYMEMORY #1 \endcsname%
}
\newcommand{\MYnewlabel}[1]{%
  \renewcommand\MYcurrentlabel{#1}%
  \MYoldlabel{#1}%
}
\newcommand{\MYdummylabel}[1]{}
\newcommand{\torestate}[1]{%
  \let\MYoldlabel\label%
  \let\label\MYnewlabel%
  #1%
  \MYstore{\MYcurrentlabel}{#1}%
  \let\label\MYoldlabel%
}
\newcommand{\restatetheorem}[1]{%
  \let\MYoldlabel\label
  \let\label\MYdummylabel
  \begin{theorem*}[Restatement of \prettyref{#1}]
    \MYload{#1}
  \end{theorem*}
  \let\label\MYoldlabel
}
\newcommand{\restatedef}[1]{%
  \let\MYoldlabel\label
  \let\label\MYdummylabel
  \begin{definition*}[Restatement of \prettyref{#1}]
    \MYload{#1}
  \end{definition*}
  \let\label\MYoldlabel
}
\newcommand{\restatelemma}[1]{%
  \let\MYoldlabel\label
  \let\label\MYdummylabel
  \begin{lemma*}[Restatement of \prettyref{#1}]
    \MYload{#1}
  \end{lemma*}
  \let\label\MYoldlabel
}
\newcommand{\restateprop}[1]{%
  \let\MYoldlabel\label
  \let\label\MYdummylabel
  \begin{proposition*}[Restatement of \prettyref{#1}]
    \MYload{#1}
  \end{proposition*}
  \let\label\MYoldlabel
}
\newcommand{\restatefact}[1]{%
  \let\MYoldlabel\label
  \let\label\MYdummylabel
  \begin{fact*}[Restatement of \prettyref{#1}]
    \MYload{#1}
  \end{fact*}
  \let\label\MYoldlabel
}
\newcommand{\restateobs}[1]{%
  \let\MYoldlabel\label
  \let\label\MYdummylabel
  \begin{observation*}[Restatement of \prettyref{#1}]
    \MYload{#1}
  \end{observation*}
  \let\label\MYoldlabel
}
\newcommand{\restate}[1]{%
  \let\MYoldlabel\label
  \let\label\MYdummylabel
  \MYload{#1}
  \let\label\MYoldlabel
}
\let\origparagraph\paragraph
\renewcommand{\paragraph}[1]{\origparagraph{#1.}}
\let\citet\cite
\theoremstyle{definition}
\DeclareUrlCommand\email{}
\newcommand{\restateproblem}[2]{%
  \let\MYoldlabel\label
  \let\label\MYdummylabel
  \begin{problem*}[Restatement of \prettyref{#1}, {#2}]
    \MYload{#1}
    \end{problem*}
  \let\label\MYoldlabel
}
\newcommand\whichfont{1}
\newcommand{\sumn}{\frac{1}{M}\sum\limits_{i=1}^N}
\newcommand{\mE}{\mathop{\mathbb{E}}}
\newcommand{\SoSp}[1]{\sststile{#1}{}}
\newcommand{\bw}{\boldsymbol{w}}
\g@addto@macro\TPT@defaults{\footnotesize}
\DeclareUrlCommand\email{}
\renewcommand{\bf}{\textbf}
\renewcommand{\it}{\em}
\DeclareUrlCommand\email{}
\let\pref=\prettyref
\begin{document}

\title{List Decodable Subspace Recovery}
\author{Prasad Raghavendra\thanks{University of California, Berkeley, research supported by NSF Grant CCF 1718695.} \and Morris Yau \thanks{University of California, Berkeley, research supported by NSF Grant CCF 1718695.}}
\maketitle
\thispagestyle{empty}

\begin{abstract}
Learning from data in the presence of outliers is a fundamental problem in statistics.  In this work, we study robust statistics in the presence of overwhelming outliers for the fundamental problem of subspace recovery.  Given a dataset where an $\alpha$ fraction (less than half) of the data is distributed uniformly in an unknown $k$ dimensional subspace in $d$ dimensions, and with no additional assumptions on the remaining data, the goal is to recover a succinct list of $O(\frac{1}{\alpha})$ subspaces one of which is nontrivially correlated with the planted subspace.  We provide the first polynomial time algorithm for the 'list decodable subspace recovery' problem, and subsume it under a more general framework of list decoding over distributions that are "certifiably resilient" capturing state of the art results for list decodable mean estimation and regression.           
\end{abstract}

\clearpage

\tableofcontents
\clearpage

\section{Introduction}

A large hurdle for the deployment of algorithms in high dimensional statistics is their susceptibility to outliers.  The central paradigm of statistical inference is finding the parameters of a statistical model given data.  A long line of work in the robust statistics literature, models 'real world' data as a distributional perturbation of a parameterized generative model $\cD$.  Here robust estimators have been designed for decades, see \cite{huber2011robust}.  Under the classic "Huber Contamination Model" data $X_1,...,X_N$ is drawn i.i.d from a distribution that is a mixture of an inlier distribution $\cD$ belonging to a parameterized family, and an outlier distribution $A$ which can be chosen adversarially.  
$$X_1,...,X_N \sim \alpha D + (1 - \alpha)A $$
Here $\alpha$ is a constant in $[0,1]$ corresponding to the fraction of the dataset that is comprised of inliers and is presumed to be known.  The goal is to recover the relevant parameters of $\cD$ such as mean, covariance, etc.  For $\alpha > 1/2$ the inliers overwhelm the outliers, and we are in the setting of classical robust statistics for which a recent flurry of computationally tractable algorithms have been developed, see survey \cite{dk19survey}.

Less well understood are the settings in which high dimensional statistical inference is possible in the presence of overwhelming outliers.  For $\alpha < 1/2$, we are in the setting of overwhelming outliers, where returning a single estimator for relevant parameters of $\cD$ is impossible as the outlier distribution $A$ can belong to the same distributional family as $D$ but with wildly different parameters.  With the outliers outnumbering the inliers, there is no unique identification of parameters, a problem we refer to as a "failure of identifiability".  However, one could hope to output a short list of estimators of length $O(\frac{1}{\alpha})$, one of which is guaranteed to be close to the true parameters of $\cD$.  \cite{CharikarSV17} introduced this relaxed notion of recovery under the umbrella of "list decodable robust statistics".       

A first observation is that list decoding is at least as hard as identifying the parameters of mixture models.  With nothing but a planted set of statistical inliers, the outliers can assume any configuration.  A remarkably benign configuration is for the outliers to be arranged as independent mixtures.  In this manner, the gaussian mixture model is a special case of list decodable mean estimation, the mixtures of linear regressions is a special case of list decodable regression, and likewise subspace clustering is a special case of list decodable subspace recovery.  Naturally, any theoretical guarantee for list decodable robust statistics carries directly over to its mixture model counterpart.  Although the converse is evidently false, the chief intellectual thrust of list decodable robust statistics is to establish the settings wherein statistical inference in the presence of overwhelming adversarial outliers is computationally no harder than clustering, a remarkable assertion, especially in light of the settings where list decoding is information theoretically impossible (see eg. \cite{KKK19} \cite{diakonikolas2018list}\cite{kothari2017better}).   

\paragraph{Results}
In this paper we build on a series of works for list decoding of mean estimation \cite{CharikarSV17} \cite{diakonikolas2018list} \cite{kothari2017better}, regression \cite{KKK19} \cite{RY19}, and tackle the natural problem of subspace recovery. Informally, given a dataset for which an $\alpha$ fraction is drawn uniformly in a $k$ dimensional subspace in $d$ dimensions, denoted $U$, and with no additional assumptions on the remaining data, our algorithm outputs a succinct list of $O(\frac{1}{\alpha})$ candidate subspaces one of which is close to the true generative $U$.  Our algorithm is computationaly tractable, runs in polynomial time in both $d$ and $k$.  Furthermore, our algorithm is robust to additive noise, well conditioned linear transformations of the underlying inlier distribution, and succeeds even under the substantially more demanding corruption model where the adversary can simulate any $(1 - \alpha)$ total variation distance corruption of the data.  

Our main algorithmic result is an algorithm for list-decodable subspace recovery.
\begin{theorem} \label{thm:main-alg}
Suppose $\{ X^*_{i} \}_{i \in [N]}$ are drawn from $N(0,I_d)$ and let $P$ be a projection to a $k$-dimensional subspace of $\R^d$.
Let $\{X_i | i \in [N]\}$ be generated by setting 
$$ X_i = P X_i^* + \gamma_i$$
for some additive noise $\gamma_i$ satisfying $$ \sum_{i = 1}^N \norm{\gamma_i}^2 = \epsilon N $$
Let $\{\tilde{X}_i | i \in [N]\}$ be a set of points such that there exists a subset $\cS \in [N]$ of size $|\cS| \geq \alpha N$ with $\tilde{X}_i = X_i$ for all $i \in \cS$.
For all $\eta > 0$, given $N > d^{O(1/\eta^4)}$ samples, there is an algorithm running in time $d^{O(1/\eta^4)}$ that computes a list of $O(1/\alpha)$ projection matrices $\{\Pi_1,\ldots,\Pi_{\ell}\}$ such that 
\[ \min_{j} \norm{P - \Pi_{j}}^2 \leq O\left(\frac{\epsilon}{\eta^2 \alpha^5} + \frac{4ck\eta}{\alpha^5} \right)\]

\end{theorem}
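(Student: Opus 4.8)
The plan is to cast recovery as a polynomial feasibility problem and solve its sum-of-squares (SoS) relaxation, in the spirit of the list-decoding algorithms for mean estimation and regression that this paper generalizes. Introduce scalar indeterminates $w_1,\dots,w_N$ (indicating a candidate inlier subset) together with a symmetric matrix indeterminate $\Pi\in\R^{d\times d}$, subject to $w_i^2=w_i$ for all $i$, $\sum_i w_i\ge\alpha N$, $\Pi=\Pi^{\intercal}$, $\Pi^2=\Pi$, $\Tr\Pi=k$ (forcing $\Pi$ to be an orthogonal projection onto a $k$-dimensional subspace), and the fit axiom $\frac1N\sum_i w_i\,\norm{(\Id-\Pi)\tilde X_i}^2\le\epsilon/\alpha$. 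Since $(\Id-P)\tilde X_i=(\Id-P)\gamma_i$ for $i\in\cS$ and $\sum_{i\in\cS}\norm{\gamma_i}^2\le\epsilon N$, the honest assignment $w=\mathbf 1_{\cS}$, $\Pi=P$ satisfies every axiom, so the degree-$D$ relaxation with $D=O(1/\eta^4)$ is feasible and yields a pseudo-expectation $\pE$ consistent with all of them; the task reduces to rounding $\pE$ into $O(1/\alpha)$ projections, one near $P$.

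\paragraph{Certifiable anti-concentration of the inlier law}
The heart of the argument is a structural statement, provable inside the degree-$D$ SoS proof system: if a feasible pair $(w,\Pi)$ has $\supp(w)$ overlapping $\cS$ in at least a small constant multiple of $\alpha^2 N$ indices, then $\norm{P-\Pi}^2$ is small. Writing $\norm{P-\Pi}^2=2\sum_j\norm{Pu_j}^2$ with $u_1,\dots,u_{d-k}$ an orthonormal basis of $\range(\Id-\Pi)$, it suffices to certify, uniformly and then summed over the basis, that every direction in $\range(\Id-\Pi)$ is nearly orthogonal to $\range(P)$. The fit axiom gives $\frac1N\sum_i w_i\inner{\tilde X_i,v}^2\le\epsilon/\alpha$ for each unit $v\in\range(\Id-\Pi)$; on $i\in\cS\cap\supp(w)$ we have $\inner{\tilde X_i,v}=\inner{PX_i^*,v}+\inner{\gamma_i,v}$ with $\inner{PX_i^*,v}/\norm{Pv}$ standard Gaussian, so $\gtrsim\alpha^2 N$ of them cannot all be small unless $\norm{Pv}$ is small. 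To turn this into a degree-$D$ SoS certificate uniform in $v$, I would prove a \emph{certifiable anti-concentration} statement for the Gaussian-in-a-$k$-subspace law: there is an even sum-of-squares polynomial $q$ of degree $O(1/\eta)$ with $q(0)=1$, $q(t)\le\eta$ for $\abs{t}\ge\sqrt{\eta}$, and $\E_{g\sim N(0,1)}q(g)\le\eta$, built from Hermite/Chebyshev bases as in the list-decodable-regression literature; averaging $q$ against the empirical inlier law and invoking the fit axiom then yields the SoS inequality $\norm{P-\Pi}^2\le O\big(\frac{\epsilon}{\eta^2\alpha^5}\big)+O\big(\frac{ck\eta}{\alpha^5}\big)$, where the first term propagates the noise budget through the overlap, the second is the residual slack of $q$ accrued over the $k$ effective dimensions of $\range(P)$, and $c$ is the absolute constant of that anti-concentration bound.

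\paragraph{Extracting an $O(1/\alpha)$-list}
Given $\pE$, I would produce the list via the clustering/rounding framework for list-decodable estimation via SoS. Two feasible pairs $(w,\Pi)$, $(w',\Pi')$ with $\Pi,\Pi'$ far apart must have $\inner{w,w'}$ small: otherwise $\supp(w)\cap\supp(w')$ is large, and applying the structural lemma on that overlap (with $P$ replaced by the empirical subspace of the overlap) forces $\Pi$ and $\Pi'$ both close to it, hence to each other. Since each feasible $w$ has $\sum_i w_i\ge\alpha N$ while pairwise overlaps of far-apart solutions are $o(\alpha^2 N)$, at most $O(1/\alpha)$ pairwise-far clusters coexist inside $[N]$, and a standard pseudo-distribution argument extracts one representative projection per cluster --- e.g.\ $\pE[\Pi\,\mathbf 1\{\text{cluster}\}]/\pE[\mathbf 1\{\text{cluster}\}]$ rounded to the nearest rank-$k$ projection, or Gaussian rounding from the degree-$2$ pseudo-moments of $\Pi$. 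The honest $(\mathbf 1_{\cS},P)$ is feasible with $\card{\cS}\ge\alpha N$, so its cluster is found, and by the structural lemma its representative $\Pi$ obeys $\norm{P-\Pi}^2\le O\big(\frac{\epsilon}{\eta^2\alpha^5}+\frac{ck\eta}{\alpha^5}\big)$, as claimed; the list has size $O(1/\alpha)$.

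\paragraph{The main obstacle and the remaining pieces}
The principal difficulty is the certifiable-anti-concentration step: constructing the SoS-friendly polynomial $q$ for the one-dimensional marginals of the planted Gaussian and controlling, within a low-degree SoS proof, how its error --- amplified both by the adversary's control of a $(1-\alpha)$ fraction of the data and by the reweighting $w$ --- propagates into the Frobenius bound, so as to land on the exact exponent $\alpha^{-5}$ and the $k\eta$ slack rather than something that scales with $d$. Two further ingredients are routine but needed: a matrix/tensor concentration argument showing $N>d^{O(1/\eta^4)}$ clean samples $X_i^*\sim N(0,I_d)$ suffice for the degree-$D$ empirical moments of the inlier law to be close enough to their population values that every SoS inequality above survives with the stated slack; and the observation that the theorem's hypotheses already subsume the stronger guarantees advertised in the introduction --- additive noise is absorbed by the $\epsilon$-budget in the fit axiom, a well-conditioned linear image of the Gaussian merely rescales the covariance and the polynomial $q$, and the assumption that some $\cS$ of size $\ge\alpha N$ has $\tilde X_i=X_i$ is exactly a $(1-\alpha)$-total-variation corruption, so no part of the argument changes.
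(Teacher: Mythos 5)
Your overall architecture — encode the inlier indicators $w_i$ and the projection $\Pi$ as a polynomial system, certify via SoS anti-concentration of the Gaussian that any feasible solution whose support overlaps $\cS$ substantially must have $\Pi$ close to $P$, then round the pseudo-expectation to an $O(1/\alpha)$ list — is the same as the paper's (the anti-concentration step is exactly the paper's ``SoS certifiable resilience'' theorem, and your polynomial $q$ plays the role of the certifiably anti-concentrated polynomial $I_\eta$). However, there is a genuine gap in the rounding step. You write that since the honest assignment $(\mathbf 1_{\cS},P)$ is feasible, ``its cluster is found.'' This does not follow: a pseudo-expectation returned by an arbitrary feasible point of the SoS SDP need not place any weight whatsoever on the honest solution — it can be entirely ``supported'' on adversarial clusters, since the adversary controls a $(1-\alpha)$ fraction of the points and can plant $\Omega(1/\alpha)$ perfectly consistent fake subspaces. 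Feasibility of the honest assignment guarantees the SDP is solvable, not that its solution correlates with $\cS$. The paper closes this hole by minimizing $\sum_i \pE[w_i]^2$ over all feasible pseudo-expectations and proving (its Lemma~\ref{lem:FrobeniusMinimization}) that the Frobenius-minimal pseudo-expectation satisfies $\pE[\frac{1}{\alpha N}\sum_i w_i w_i']\ge\alpha$ for \emph{every} integral solution $w'$, including $\mathbf 1_{\cS}$; this ``comprehensiveness'' is what makes the subsequent conditioning on $w_i=1$ for a randomly sampled $i$ succeed with probability $\Omega(\alpha)$. Your proposal contains no analogue of this step, and the clustering-based extraction you sketch (``$\pE[\Pi\,\mathbf 1\{\text{cluster}\}]/\pE[\mathbf 1\{\text{cluster}\}]$'') is not an operation one can perform on pseudo-moments, since a cluster indicator is not a low-degree polynomial in the program variables and the pseudo-distribution's ``support'' is not accessible.

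Two smaller points. First, the $\alpha^{-5}$ in the final bound is not something you can hope to pin down without committing to a rounding mechanism: in the paper it decomposes as $\alpha^{-1}$ from the resilience certificate times $\alpha^{-4}$ from the Frobenius-minimization-plus-conditioning analysis (the framework is invoked with $\alpha^2$ in place of $\alpha$), so the exponent is a consequence of precisely the ingredient your proposal omits. Second, after conditioning, the matrix $\pE[w_i\Pi]/\pE[w_i]$ satisfies $\Pi\preceq\Id$ and $\Tr\Pi=k$ but is not a projection; the paper needs an extra (easy but necessary) spectral rounding to the top-$k$ eigenspace with a constant-factor loss (its Lemma~\ref{lem:eigenspace}), which your ``rounded to the nearest rank-$k$ projection'' gestures at but does not justify.
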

Note that the noise model is perhaps the strongest possible, in that the adversarial corruptions can depend arbitrarily on the samples $\{X_i\}_{i \in [N]}$ and it also includes an additive noise of $\gamma_i$.
For concreteness, if we consider the case with no additive noise ($\epsilon = 0$) then the algorithm recovers a projection $\Pi_j$ with $\norm{P-\Pi_j} \leq O(\sqrt{k})$ in time that is polynomial in $k, d$.   
More generally, the Gaussian distribution $N(0,I_d)$ can be replaced by a distribution whose anti-concentration can be efficiently certified by sum-of-squares proofs (see Appendix for a formal definition).  
Specifically, our results hold for any well conditioned linear transformation of a spherically symmetric distribution with sub-exponential tails (see lemma 9.1 of \cite{RY19}).

Conceptually, we formally state the notion of SoS certifiable resilience, and use it to derive a general algorithm for list-decoding via SoS.  While the ideas behind the general algorithm are implicit in  \cite{KKK19}, we believe the notion of SoS certifiable resilience gives conceptual clarity and might be useful in further applications of the SoS SDPs.  We apply the framework of SoS-certifiable resilience in our presentation of the result for subspace recovery.

Finally we exhibit a lower bound showing that list decodable subspace recovery is impossible even if the inlier distribution is the uniform over the boolean hypercube (see lemma \ref{lem:lowerbound}).

\subsection{Related Work} 

\paragraph{Subspace Recovery} 

Here we discuss related work for the problem of subspace recovery, and highlight key similarities and differences with list decoding. The literature on subspace recovery is vast and we do not attempt a full overview--for a survey, see \cite{EV12}.  Despite the vast literature, the key takeaway is that existing methods for subspace recovery, to the best of our knowledge, fail in the presence of overwhelming adversarial outliers.  

In the worst case setting, \cite{hardt2013algorithms} explore robust subspace recovery in a purely deterministic model where inliers are arranged in general position within a subspace and outliers are in general position in the ambient space.  They provide an algorithm recovering the planted subspace provided $\alpha \geq \frac{k}{d}$.  Their result is essentially optimal as it is Small Set Expansion Hard to recover the subspace if the fraction of outliers is any larger. Although there is no direct comparison with list decodable subspace recovery, the hardness result is solid evidence that subspace recovery in the presence of overwhelming outliers is hard without additional statistical assumptions on the inliers.  Just as worst case assumptions are arguably overly pessimistic, average case assumptions are arguably overly optimistic.  
 
In the statistics literature, subspace clustering is a problem where data is distributed in a union of subspaces in high dimensions where the distribution of points within subspaces, and the relative orientation of subspaces are subject to theoretical assumptions. The goal is to cluster points into their respective subspaces.  This is in contrast with the goals of list decoding, which is a parameter estimation task.    

Statistical approaches model the data according to a mixture of degenerate gaussians. In a sense, this modeling assumption is necessary as subspace clustering is information theoretically impossible even over natural distributional families (see \ref{lem:lowerbound}).  

  A representative approach is Sparse Subspace Clustering (SSC) \cite{EV12} and its robust variant (RSSC) \cite{SEC13} which uses techniques from sparse and low rank recovery algorithms.  RSSC considers subspace clustering in the presence of outliers distributed uniformly on the unit sphere.  This stands in contrast with list decodable subspace recovery where the outliers are adversarially introduced. An overview of spectral clustering algorithms can be found in \cite{V11}.  

Finally, there are subspace clustering algorithms that either lack provable gaurantees or are computationally intractable.  We list a few notable examples.  Generalized Principal Component Analysis \cite{VMS12} \cite{NMCO10} is an algebraic geometric algorithm that treats subspace clustering as a polynomial fitting problem.  Although its recovery guarantees and assumptions are minimal, it's fragile to outliers and its runtime is exponential in $k$.  K-Subspaces \cite{T99} is a generalization of K-means that approaches subspace clustering as a nonconvex optimization.  As a consequence, it is sensitive to initalization and fragile to outliers.  Other iterative algorithms include \cite{AM04}  \cite{BM00}.  Examples of EM style statistical approaches include Mixtures of Probabilistic PCA \cite{TB99}  and other nonconvex approaches include Agglomerative Lossy Compression \cite{MDHW07}.   Unfortunately, it is notoriously difficult to prove the convergence of EM and other nonconvex methods to global optima of the likelihood function. 

\paragraph{List Decodable Learning, Resilience, and the Sum of Squares}
The notion of list decodable learning was introduced by Balcan \etal \cite{BalcanBV08} for clustering problems.  List learning was extended to small $\alpha$ robust statistics in \cite{CharikarSV17}.  They obtained algorithms for list decodable mean estimation, planted partition problems, subsumed under a general stochastic convex optimization framework.    (also see \cite{SteinhardtVC16, SteinhardtKL17}).  
%
%
The same model of {\it list-decodable learning} has been studied for the case of mean estimation \cite{kothari2017outlier} and Gaussian mixture learning \cite{kothari2017better,diakonikolas2018list}.

The notion of {\it resilience} was initialy defined in \cite{steinhardt2017resilience} for robust estimation and extended in \cite{zhu2019generalized}.  
Furthermore, there has been a sequence of works developing the sum of squares method for robust statistics \cite{kothari2017outlier, kothari2017better, kothari2017approximating, hopkins2018mixture}.   

\newcommand{\starE}{\mathop{\mathbb{E}^*}}

\newcommand{\hashE}{\mathop{\mathbb{E}'}}

\newcommand{\Pbool}{P_{\mathsf{bool}}}
\newcommand{\Psum}{P^{(\alpha)}_{\mathsf{sum}}}
\newcommand{\Pcost}{P^{(\epsilon)}_{\mathsf{cost}}}

\section{List Decoding via SoS}

Let $Z_1,\ldots,Z_N$ be samples from a distribution $\cD$ over $\R^d$.
Often, parameters $\Theta^* \in \R^m$ associated with the distribution $\cD$ can be expressed as minima of a cost function associated with each data point.
Specifically, let $\Phi(\Theta,Z)$ be a cost function such that the true parameters of the distribution can be expressed as,
\begin{equation} \label{eq:theta}
    \Theta^* = \argmin_{\Theta \in \cV} \frac{1}{N} \sum_{i \in [N]} \Phi(Z_i,\Theta)
\end{equation}
where $\cV \in \R^m$ is the domain of the parameters.

Since the sum-of-squares proof system can certify facts about low-degree polynomials, we will setup the problem of parameter estimation in this setting.  
First, we assume that $\Phi$ is specified by a polynomial in $\Theta$ and $Z$.  Second, we assume that the parameters $\Theta$ belong to a semi-algebraic set that is specified by a set of polynomial  inequalities, 
$$\cV = \{q_j(\Theta) \geq 0 | 1 \leq j \leq |\cV|\}$$  
Notice that equalities $q(\Theta) = 0$ can be expressed using two inequalities $q(\Theta) \geq 0$ and $-q(\Theta) \geq 0$.
With this setup, the problem of estimating the parameters $\Theta^*$ reduces to solving an optimization problem with polynomial objective and polynomial constraints.

In this work, we will be interested in parameter estimation when an overwhelming fraction of input data is adversarially corrupted.  
Let $\{ \tilde{Z}_1,\ldots,\tilde{Z}_N \}$ be a corrupted data set wherein all but $1-\alpha$-fraction of the samples are adversarially corrupted.  Specifically, for some subset $\cS \subset [N]$ with $|\cS| \geq \alpha N$, we have that
\[ \tilde{Z}_i = \begin{cases}
Z_i & \text{ if } i \in \cS\\
\text{ arbitrary } & \text{ if } i \notin \cS
\end{cases} \]
In the list-decodable recovery problem, we are to recover a small list of candidate assignments $\{\Theta_1,\ldots,\Theta_t\} $ for the parameter such that there exists at least one candidate $\Theta_i$ close to the true value of the parameter $\Theta^*$ on the original data $\{Z_1,\ldots,Z_N\}$.

Parameter estimation in presence of adversarially chosen outliers poses two challenges.  First, the algorithm needs to identify which subset $\cS$ of $\alpha N$ samples $\tilde{Z}_i$ are uncorrupted.  Second, even if the algorithm identifies the subset $\cS$ of samples exactly, it is unclear if the surviving uncorrupted samples still yield a good estimate for the parameters.
The problem of identifying the correct subset of samples $\cS$ can also be posed as a polynomial optimization problem.  The idea is as follows, introduce variables $w_i$ for each sample $\tilde{Z}_i$ to indicate whether the sample is corrupted or not.  Each variable $w_i$ takes a boolean value, which can be enforced by the polynomial constraint 
\[ \text{(Booleanness)} \quad \Pbool(w_i) \defeq w_i^2 - w_i = 0 \]
Furthermore, at least an $\alpha$-fraction of the samples are uncorrupted yielding the constraint
\[ \text{(Sum)} \quad \Psum(\bw) \defeq \sum_{i \in [N]} w_i \geq \alpha N \]
This formulation underlies all applications of SoS SDPs to robust statisics \cite{hopkins2018mixture, kothari2017outlier,kothari2017better}.

Given the subset $\cS \subset \{Z_1,\ldots,Z_N\}$ of uncorrupted samples, a natural estimate of the parameters would be to minimize the total cost for samples within $\cS$.
Specifically, one can construct an estimate $\Theta_{|\cS}$
\begin{equation} \label{eq:thetaS}
    \Theta^*_{|\cS} = \argmin_{\Theta \in \cV} \frac{1}{|\cS|} \cdot \sum_{Z \in \cS} \Phi(Z,\Theta)
\end{equation}
This corresponds to a polynomial constraint of the form,
\[ \text{(Cost)} \qquad \Pcost(\bw,\Theta): \sum_{i\in [N]} w_i \Phi(\tilde{Z}_i, \Theta) \leq \epsilon N \]
Here we use $\epsilon N$ as a generic upper bound, the correct value for the upper bound would depend on the application at hand.

The language of polynomials is very powerful in that a large number of robust parameter estimation problems can be posed as multi-variate constrained polynomial optimization.  On the flipside, it is NP-hard to solve these polynomial optimization problems.  %
The Sum-of-Squares SDP hierarchy and associated sum-of-squares proofs provides a family of efficient algorithms to imperfectly reason about such  systems of polynomials. 

\subsection{Sum-of-Squares SDP hierarchy} \label{SoS-toolkit}

\paragraph{Pseudoexpectations}

%

The sum-of-squares SDP relaxations for a system of polynomial inequalities $\cP$ are a sequence of increasingly stronger SDP relaxations.
The degree $\ell$ SoS SDP relaxation is intended to find the degree $\ell$-moments of a ``probability distribution" over solutions to the system $\cP$.
While the SDP relaxation returns a set of candidate "degree $\ell$ moments" of a distribution, the moments are {\it pseudo-moments} in that there might exist no distribution over solutions to $\cP$ with those moments.
It is notationally convenient to state the degree $\ell$ SoS SDP relaxation in terms of a pseudo-expectation functional $\pE$.

\begin{definition}
A degree $\ell$ pseudoexpectation $\pE : \R[x]_{\leq \ell} \to \R$ {\it satisfying $\cP$} is a linear functional over polynomials of degree at most $\ell$ satisfying 
\begin{enumerate}
\item(Normalization) $\pE[1] = 1$, 
\item(Constraints of $\cP$) $\pE[p(x) a^2(x)] \geq 0$ for all $p \in \cP$ and polynomials $a$ with $\deg(a^2 \cdot p) \le \ell$,
\item(Non-negativity on square polynomials)$\pE[q(x)^2] \ge 0$ whenever $\deg(q^2) \le \ell$.
\end{enumerate}
\end{definition}

For any fixed $D \in \mathbb{N}$, given a polynomial system,one can efficiently compute a degree $D$ pseudo-expectation in polynomial time. 
\begin{fact} (\cite{Nesterov00}, \cite{Parrilo00}, \cite{Lasserre01}, \cite{Shor87}). For any $n$, $D \in \Z^+$, let $\pE_{\zeta}$ be degree $D$ pseudoexpectation satisfying a polynomial system $\cP$.  Then the following set has a $n^{O(D)}$-time weak
separation oracle (in the sense of \cite{GLS1981}):
 \begin{align*}
 & \{\pE_\zeta(1, x_1, x_2, . . . , x_n)^{\otimes D}| \text{ degree } D \text{ pseudoexpectations } \pE_{\zeta} \text{ satisfying }\cP \}
\end{align*}

Armed with a separation oracle, the ellipsoid algorithm finds a degree $D$ pseudoexpectation in time $n^{O(D)}$, which we call the degree $D$ sum-of-squares algorithm. 
\end{fact}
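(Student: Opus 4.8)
The plan is to realize the stated set as an explicitly presented spectrahedron in $\R^N$ with $N = \binom{n+D}{D} = n^{O(D)}$, exhibit a separation oracle for it that needs only a single eigenvalue computation, and then invoke the ellipsoid method.

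\textbf{Step 1 (Encode pseudoexpectations as moment vectors).} A degree-$D$ pseudoexpectation $\pE$ is a linear functional on $\R[x]_{\le D}$, hence is determined by the vector $M \in \R^N$, $M_\alpha = \pE[x^\alpha]$, indexed by exponent vectors $\alpha$ with $\abs{\alpha} \le D$; the tensor $\pE(1,x_1,\ldots,x_n)^{\otimes D}$ is the image of $M$ under a fixed, efficiently computable linear embedding into the symmetric subspace of $\R^{(n+1)^D}$, so a weak separation oracle for the set of valid $M$'s yields one for the set in the Fact. I would then rewrite the three defining conditions of a pseudoexpectation as constraints on $M$: (i) Normalization is the single linear equation $M_{\vec{0}} = 1$; (ii) non-negativity on squares is, by expanding $q(x)^2$ through the outer product of $q$'s coefficient vector, equivalent to positive-semidefiniteness of the moment matrix $\mathcal{M}(M)$, with rows and columns indexed by monomials of degree $\le \lfloor D/2\rfloor$ and $(\alpha,\beta)$ entry $M_{\alpha+\beta}$ — a linear matrix inequality in $M$; (iii) for each $p \in \cP$, the family of constraints $\pE[p\, a^2]\ge 0$ over admissible $a$ is equivalent to positive-semidefiniteness of the localizing matrix $\mathcal{M}_p(M)$ with $(\alpha,\beta)$ entry $\sum_\gamma p_\gamma M_{\alpha+\beta+\gamma}$, rows/columns ranging over $\alpha$ with $2\abs{\alpha}+\deg p \le D$ — again linear in $M$. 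All these matrices have side length $\le N$ and entries computable in time $n^{O(D)}$, so the set of valid moment vectors is the spectrahedron $\cK = \{M : M_{\vec{0}} = 1,\ \mathcal{M}(M)\succeq 0,\ \mathcal{M}_p(M)\succeq 0\ \forall p\in\cP\}$.

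\textbf{Step 2 (Weak separation oracle).} Given a rational $M\in\R^N$ and tolerance $\delta>0$: check the equation $M_{\vec{0}}=1$ directly; for each of the $\abs{\cP}+1$ matrices $\mathcal{M}_j(M)$ compute, in $\poly(N)=n^{O(D)}$ time via an approximate symmetric-eigenvalue routine run to precision $\delta/\poly(N)$, its smallest eigenvalue $\lambda_{\min}$ and a corresponding unit eigenvector $v$. If all $\lambda_{\min}\ge -\delta$, declare $M$ to be in the $\delta$-neighborhood of $\cK$; otherwise pick a violated constraint and output the hyperplane whose $\alpha$-coordinate is $\langle vv^\top,\ \partial\mathcal{M}_j/\partial M_\alpha\rangle$, normalized so that $\norm{c}_\infty = 1$. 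Since $\langle vv^\top,\mathcal{M}_j(M')\rangle\ge 0$ for every $M'\in\cK$ but $<-\delta$ at $M$, this is a valid \cite{GLS1981}-style weak separating hyperplane, with bit-complexity polynomial in that of $M$ and $v$; the total running time is $n^{O(D)}$.

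\textbf{Step 3 (Ellipsoid).} I would then feed this oracle to the ellipsoid method in the weak-feasibility form of \cite{GLS1981}: it runs for a number of iterations polynomial in $N$ and in $\log(R/\epsilon)$, where $R$ is the radius of an a priori ball containing $\cK$ and $\epsilon$ the output tolerance, each iteration making one oracle call plus a $\poly(N)$-time ellipsoid update, for a total of $n^{O(D)}$ time. The outer radius $R = n^{O(D)}$ is available because in the intended applications $\cP$ bounds the variables — e.g.\ the Booleanness constraints $w_i^2 = w_i$ force $0 \le M_\alpha \le 1$ — and in general one may add such a ball constraint without loss; together with $M_{\vec{0}}=1$ and $\mathcal{M}(M)\succeq 0$ this bounds every coordinate of $M$. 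The output is, within $\epsilon$, a degree-$D$ pseudoexpectation satisfying $\cP$.

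\textbf{Main obstacle.} The delicate point is that $\cK$, being a spectrahedron, is generally not full-dimensional, so the plain ellipsoid method cannot isolate an interior point and must be handled with care. I would address this in one of two standard ways: (a) compute the affine hull of $\cK$ — the equalities forced by $\cP$ together with the flat directions revealed by rank deficiency of $\mathcal{M}(M)$ and $\mathcal{M}_p(M)$ encountered during the run — and run the ellipsoid inside that subspace; or (b) settle for a \emph{weak} pseudoexpectation obeying all constraints up to additive $\epsilon$, which is all the downstream sum-of-squares rounding arguments actually consume, and then either symmetrize/project or propagate the $\epsilon$-slack through the analysis. The remaining items — that approximate eigendecomposition to inverse-polynomial precision runs in $\poly(N)$ time, and that the ellipsoid iterates retain polynomial bit-length under rational arithmetic — are routine consequences of the standard analysis of the ellipsoid algorithm.
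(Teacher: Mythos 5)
Your proposal is correct and is exactly the standard argument underlying this Fact: the paper itself states it without proof, deferring to the cited references (Shor, Nesterov, Parrilo, Lasserre, GLS), and your moment-vector/localizing-matrix spectrahedron formulation with an eigenvalue-based weak separation oracle fed to the ellipsoid method is precisely the route those references take. The caveats you flag (an a priori radius bound on the moment set and the weak-feasibility/bit-complexity issues) are real but are likewise glossed over in the paper's statement, and they are harmless in the paper's applications where Booleanness and orthonormality constraints bound all moments.
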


Roughly speaking, the degree $D$-pseudoexpectation functional yields the "degree $D$ moments" of a potential distribution over solutions to the polynomial system.  However, there might not exist any probability distribution with these moments.
Although, the $\pE$ functional does not correspond to an expectation over actual solutions in general, this intuition is useful to keep in mind, and we will appeal to it whenever needed in this overview.
To reason about the properties of pseudo-expectations, one harnesses the dual object namely sum-of-squares proofs.  We turn our attention to sum-of-squares proofs now.
\paragraph{Sum-of-Squares Proofs}
For any nonnegative polynomial $p(x): \R^d \rightarrow \R$, one could hope to prove its nonnegativity by writing $p(x)$ as a sum of squares of polynomials $p(x) = \sum_{i=1}^m q_i(x)^2$ for a collection of polynomials $\{q_i(x)\}_{i=1}^m$.  Such a proof would be succinct and easy to verify.  Unfortunately, there exist nonnegative polynomials with no sum of squares proof even for $d = 2$.  Nevertheless, there is a generous class of nonnegative polynomials that admit a proof of positivity via a proof in the form of a sum of squares.  The key insight of the sum of squares algorithm, is that these sum of squares proofs of nonnegativity can be found efficiently provided the degree of the proof is not too large.  We begin with a rough overview of sum of squares proofs, their dual object pseudoexpectations, and then present the guarantees of the SoS algorithm.      

\begin{definition} (Sum of Squares Proof)
Let $\mathcal{A}$ be a collection of polynomial inequalities $\{p_i(x) \geq 0\}_{i=1}^m$.  A sum of squares proof that a polynomial $q(x) \geq 0$ for any $x$ satisfying the inequalities in $\mathcal{A}$ takes on the form 

\[
    \left(1+ \sum_{k \in [m']} b_k^2(x)\right) \cdot q(x) = \sum_{j\in [m'']} s_j^2(x) + \sum_{i \in [m]} a_i^2(x) \cdot p_i(x) \mcom
\]
where $\{s_j(x)\}_{j \in [m'']},\{a_i(x)\}_{i \in [m]}, \{b_k(x)\}_{i \in [m']}$ are real polynomials.  If such an expression were true, then $q(x) \geq 0$ for any $x$ satisfying $\mathcal{A}$.  We call these identities sum of squares proofs, and the degree of the proof is the largest degree of the involved polynomials $\max \{\deg(s_j^2), \deg(a_i^2 p_i)\}_{i,j}$.  Naturally, one can capture polynomial equalities in $\mathcal{A}$ with pairs of inequalities.   We denote a degree $\ell$ sum of squares proof of the positivity of $q(x)$ from $\cA$ as $\cA \sststile{\ell}{x} \{q(x) \geq 0\}$ where the superscript over the turnstile denote the formal variable over which the proof is conducted.  This is often unambiguous and we drop the superscript unless otherwise specified.    
\end{definition}

Sum of squares proofs can also be strung together and composed according to the following convenient rules.  
\begin{fact}
For polynomial systems $\cA$ and $\cB$, if $\cA \sststile{d}{x} \{p(x) \geq 0\}$ and $\cB \sststile{d'}{x} \{q(x)\geq 0\}$ then $\cA \cup \cB \sststile{\max(d,d')}{x}\{p(x) + q(x) \geq 0\}$.  Also $\cA \cup \cB \sststile{dd'}{x} \{p(x)q(x) \geq 0\}$ 
\end{fact}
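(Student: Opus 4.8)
The plan is to establish both composition rules by a direct manipulation of the two certificate identities handed to us by the hypotheses. The single algebraic fact doing all the work is that a product of sums of squares is a sum of squares, $\left(\sum_i a_i^2\right)\left(\sum_j b_j^2\right) = \sum_{i,j}(a_i b_j)^2$; in particular, multiplying any valid sum-of-squares certificate through by a sum of squares --- for instance by one of the auxiliary multipliers $1+\sum_k b_k^2$ --- leaves it of the prescribed shape ``sum of squares plus a square-weighted combination of the given constraints.''

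For the addition rule, I would write the hypotheses as
\[ \Bigl(1+\sum_k b_k^2\Bigr)\,p = \Sigma_p + \sum_i a_i^2\, p_i, \qquad \Bigl(1+\sum_\ell c_\ell^2\Bigr)\,q = \Sigma_q + \sum_j e_j^2\, q_j, \]
with $\cA = \{p_i\ge 0\}$, $\cB=\{q_j\ge 0\}$ and $\Sigma_p,\Sigma_q$ sums of squares, the two identities having degree at most $d$ and $d'$ respectively. Multiplying the first by $(1+\sum_\ell c_\ell^2)$ and the second by $(1+\sum_k b_k^2)$ gives both sides the common left factor $B \defeq (1+\sum_k b_k^2)(1+\sum_\ell c_\ell^2)$, again of the form $1+(\text{sum of squares})$; adding the two, the right-hand side is a sum of squares plus terms $(\text{sum of squares})\cdot p_i$ and $(\text{sum of squares})\cdot q_j$, each a legal constraint-weighted term over $\cA\cup\cB$. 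Hence $B\cdot(p+q)$ is a valid certificate, i.e.\ $\cA\cup\cB\sststile{}{x}\{p+q\ge 0\}$; in the denominator-free presentation of the certificates --- the one used throughout this paper --- the cross-multiplication is unnecessary and the degree is plainly $\max(d,d')$.

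For the product rule, I would instead multiply the two identities:
\[ \Bigl(1+\sum_k b_k^2\Bigr)\Bigl(1+\sum_\ell c_\ell^2\Bigr)\,pq = \Bigl(\Sigma_p + \sum_i a_i^2\, p_i\Bigr)\Bigl(\Sigma_q + \sum_j e_j^2\, q_j\Bigr), \]
and expand the right-hand side into four groups: $\Sigma_p\Sigma_q$ is a sum of squares; $\Sigma_p\, e_j^2\, q_j$ and $\Sigma_q\, a_i^2\, p_i$ are sums of squares times a single constraint of $\cA\cup\cB$; and $a_i^2 e_j^2\, p_i q_j$ is a sum of squares times $p_i q_j$, a product of two nonnegative constraints --- a legitimate combination term in the Positivstellensatz form (equivalently, one closes $\cA\cup\cB$ under products). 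The left factor is again $1+(\text{sum of squares})$, so this certifies $pq\ge 0$ over $\cA\cup\cB$, and each term produced has degree at most $d+d'\le dd'$ (using that proof degrees are even and at least $2$, so $(d-1)(d'-1)\ge 1$), which gives the claimed bound $dd'$.

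The computation is entirely symbolic, so there is no genuine obstacle; the one thing to be careful about is the bookkeeping --- (i) checking that the weight polynomials remain sums of squares after being multiplied by an auxiliary multiplier, which is exactly the identity $\bigl(\sum a_i^2\bigr)\bigl(\sum b_j^2\bigr)=\sum(a_ib_j)^2$, and (ii) in the product rule, recognizing the cross term $p_i q_j$ as an admissible product of constraints and tracking degrees through the expansion. If the paper insists on the restricted form of Definition where combination terms weight a single constraint, the cleanest fix is to remark once that one may close the constraint system under pairwise products without loss (at the stated degree cost), after which the two rules above go through verbatim.
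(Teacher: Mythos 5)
The paper states this as a background fact and never proves it, so there is no in-paper argument to compare against; your certificate manipulation is the standard one and is essentially correct. For the sum rule, adding the two identities (after cross-multiplying by the auxiliary multipliers so that the left-hand factors agree) does produce a certificate over $\cA\cup\cB$, and your caveat that the clean bound $\max(d,d')$ really belongs to the denominator-free case is the right one to make, since with nontrivial multipliers the cross-multiplied right-hand-side terms pick up the degree of the other proof's multiplier. For the product rule, multiplying the two identities is correct, and you have put your finger on the one genuine point of friction: the cross terms $a_i^2 e_j^2\, p_i q_j$ are sums of squares times a \emph{product} of two constraints, which does not literally match the paper's Definition of a sum-of-squares proof (each term there weights a single constraint $p_i$, i.e.\ the quadratic module rather than the preordering). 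Your fix --- close $\cA\cup\cB$ under pairwise products, with the degree increase absorbed into $dd'$ --- is the standard remedy. The only thing worth adding is that if this fact is then fed into the paper's soundness statement, the pseudoexpectation must also satisfy $\pE[p_i q_j\, a^2]\ge 0$ for the product constraints, which the paper's definition of a pseudoexpectation satisfying $\cP$ does not explicitly guarantee; this is a looseness of the paper's conventions rather than a flaw in your argument. The degree arithmetic $d+d'\le dd'$ for $d,d'\ge 2$ is fine.
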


Sum of squares proofs yield a framework to reason about the properties of pseudo-expectations, that are returned by the SoS SDP hierarchy.  
%
\begin{fact} (Informal Soundness)
If $\cA \sststile{r}{x} \{q(x) \ge 0\}$ and $\pE$ is a degree-$\ell$ pseudoexpectation operator for the polynomial system defined by $\cA$, then $\pE[q(x)]  \ge 0$.
\end{fact}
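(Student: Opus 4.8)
The plan is the standard soundness argument for the sum-of-squares proof system: apply the pseudoexpectation functional $\pE$ to the polynomial identity that witnesses the SoS proof, and check nonnegativity summand by summand against the pseudoexpectation axioms. Implicit in the statement is that $\ell \ge r$, so that $\pE$ --- a linear functional on $\R[x]_{\le \ell}$ --- is defined on every polynomial occurring in the proof; I would make this hypothesis explicit.

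First, unpack $\cA \sststile{r}{x}\{q(x)\ge 0\}$: there are real polynomials $\{s_j\}_j$, $\{a_i\}_i$, $\{b_k\}_k$ with
\[
  \Bigl(1 + \sum_k b_k^2(x)\Bigr)\,q(x) \;=\; \sum_j s_j^2(x) \;+\; \sum_i a_i^2(x)\,p_i(x)
\]
as an identity in $\R[x]$, with $\deg(s_j^2),\deg(a_i^2 p_i)\le r$. Since the leading coefficient of $1+\sum_k b_k^2$ is positive there is no top-degree cancellation on the left, so $\deg\bigl((1+\sum_k b_k^2)q\bigr)=\deg(\mathrm{RHS})\le r\le\ell$, and in fact every summand on both sides --- including each $b_k^2 q$ and $q$ itself --- lies in $\R[x]_{\le\ell}$. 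Hence we may apply $\pE$ to the identity and, by linearity, distribute it over the finite sums.

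Next, the right-hand side is nonnegative under $\pE$: by the non-negativity-on-squares axiom $\pE[s_j^2]\ge 0$ for each $j$, and by the constraint axiom of $\cP$ (taking $p=p_i$ with the square $a_i^2$) $\pE[a_i^2 p_i]\ge 0$ for each $i$; summing gives $\pE\bigl[\sum_j s_j^2 + \sum_i a_i^2 p_i\bigr]\ge 0$. Therefore $\pE[q] + \sum_k \pE[b_k^2 q] = \pE\bigl[(1+\sum_k b_k^2)q\bigr]\ge 0$. In the denominator-free case --- no $b_k$'s, which is the form in which the fact is typically invoked --- the left-hand side is exactly $\pE[q]$, so $\pE[q]\ge 0$ and we are done. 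In general, the same two steps applied to the reweighted functional $\pE'[\,\cdot\,]\defeq \pE[(1+\sum_k b_k^2)\,\cdot\,] / \pE[1+\sum_k b_k^2]$ --- which is again a pseudoexpectation (of degree $\ell-2\max_k\deg b_k$) satisfying $\cA$, since multiplying each $p_i$ by the sum of squares $1+\sum_k b_k^2$ preserves being a nonnegative combination of the $p_i$ against squares --- yields $\pE'[q]\ge 0$; this is the precise sense in which the stated fact is ``informal.''

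I do not expect a genuine obstacle here: this is the textbook soundness lemma for SoS. The only points deserving any care are the degree bookkeeping in the second paragraph (so that $\pE$ is applicable to every polynomial appearing in the proof) and the harmless denominator caveat just noted; everything else is linearity of $\pE$ together with the two sign axioms.
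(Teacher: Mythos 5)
Your proof is correct. The paper states this fact without proof (it is standard folklore in the SoS literature), so there is nothing to compare against; your argument---apply $\pE$ to the certifying identity, use linearity, the squares axiom for the $s_j^2$ terms, and the constraint axiom for the $a_i^2 p_i$ terms---is exactly the standard soundness argument, and your degree bookkeeping and your observation that the denominator $1+\sum_k b_k^2$ means the literal conclusion $\pE[q]\ge 0$ only holds in the denominator-free case (otherwise one gets $\pE[(1+\sum_k b_k^2)q]\ge 0$, equivalently nonnegativity under a reweighted pseudoexpectation) are both accurate and correctly explain the ``informal'' qualifier. One tiny redundancy: once you know $\pE[(1+\sum_k b_k^2)q]\ge 0$ and that the normalizer is positive, $\pE'[q]\ge 0$ follows immediately---there is no need to rerun the two steps for $\pE'$.
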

\subsection{Certifiable Resilience} \label{sec:framework}

Returning back to our problem of parameter estimation from corrupted data, we need to address the issue that the fragment of uncorrupted data left might be insufficient to faithfully recover the parameter $\Theta$.
More precisely, we will need to make an assumption that the estimate $\Theta^*_{|S}$ (in \eqref{eq:thetaS}) is close to the true estimate $\Theta^*$ (in \eqref{eq:theta})).

The notion of {\it resilience} introduced by  \cite{steinhardt2017resilience} captures this idea.
To exploit the power of sum-of-squares SDP hierarchies, one needs a stronger notion of {\it certifiable resilience}.
A data set is  {\it certifiable resilience} where the dataset is not only resilient, but there is also an efficiently verifiable certificate/proof of its resilience.  In particular, we will be define the notion of {\it Sum-of-Squares certifiable resilience}. 
The formal definition of certifiable resilience is as follows.
\begin{definition} (SoS certifiable resilience)
Fix $\alpha \in (0,1]$ and $\epsilon,\delta > 0$.   
A dataset $\{Z_1,\ldots,Z_N\} \in \R^d$ is said to admit a degree $D$ SoS proof of $(\alpha,\epsilon,\delta)$-resilience if the following polynomial system:
\[
\left\{  \begin{array}{lllr}
\Psum(w) & \defeq    \sum_{i \in [N]}  w_i - \alpha N   & \geq 0 \\
\Pbool(w_i) & \defeq     w_i^2 - w_i & = 0& \forall i \in [N] \\
\Pcost(\bw) & \defeq  \epsilon N -   \sum_{i \in [n]} w_i \Phi(Z_i, \Theta)  & \geq 0 \\
&    q(\Theta) & \geq 0  & \forall q \in \cV
\end{array} \right\} 
\]
can be used to show that $(\sum_i w_i) \cdot \norm{\Theta - \Theta^*}^2 \leq \delta N$ using a sum-of-squares polynomial identity of the form:
\begin{equation} \label{eq:sosCertificate}
 \delta N  - \left(\sum_i w_i\right) \norm{\Theta - \Theta^*}^2 = c(\bw,\Theta) \cdot  \Pcost(\bw,\Theta) + \sum_{i \in [N]} b_i(\bw,\Theta)  \cdot \Pbool(w_i)  + \sum_{q \in \cV} A_q(\bw,\Theta) \cdot q(\Theta) + \lambda \cdot \Psum(\bw) + \lambda_0 \end{equation}
 
where $\lambda,\lambda_0 \in \R^+$, $c(\bw,\Theta)$ and $A_q(\bw,\Theta)$ are sum of squares polynomials and $b_i(\bw,\Theta)$ are arbitrary polynomials in $\bw,\Theta$.  Furthermore, the degree of all the terms in the equality are at most $D$.
\end{definition}
\begin{remark}
We wish to stress on the important distinction between the SoS certificate \eqref{eq:sosCertificate} and the standard notion of SoS proofs.  In \eqref{eq:sosCertificate}, the coefficient of $\Psum$ is necessarily a real number $\lambda$, while a general SoS proof would allow the coefficient of $\Psum$ to be an arbitrary SoS polynomial.  Operationally, if one is constructing the SoS certificate by a proof, this restriction translates to never multiplying $\Psum$ constraint with any polynomial. 
\end{remark}

\subsection{List-decoding}
We will now present an SoS based algorithm for list-decoding the parameters $\Theta$ under the assumption of certifiable resilience.
The essential ingredients of the algorithm are implicit in \cite{KKK19}, but we reformalize the ideas in generality, under the notion of {\it certifiable resilience}.  The notion of certifiable resilience clarifies design of algorithms for list-decodable learning via SoS, and is also useful in presenting our work on subspace recovery.

The general idea behind the algorithm is to solve a sufficiently high-degree SoS SDP relaxation of the polynomial system underlying {\it certifiable resilience}.
This yields a collection of pseudomoments from which we will recover a list of assignments for the parameter $\Theta$, of which one is close to the true value $\Theta^*$.

\paragraph{Frobenius Minimization}
This program faces an immediate bottleneck.
Even if the pseudo-expectation functional corresponded to a true distribution over solutions to the polynomial system, it is conceivable that the distribution does not include the solution $\Theta^*$.  Specifically, the distribution might completely ignore the $\alpha N$ uncorrupted data points, and instead return feasible solutions among the rest.  To overcome this issue, we need to find pseudo-expectations that are {\it comprehensive} in that every valid solution is in their support.
This is achieved by finding a pseudo-expectation functional of maximum entropy or minimum Frobenius norm, among all pseudo-expectation functionals that satisfy the constraints.  This technique first used in the work of Hopkins and Steurer \cite{hopkins2017efficient}, was also used in the two prior works on list-decoding via SoS SDP hierarchy \cite{RY19, KKK19}.
In particular, both these works show that the pseudo-expectation functional that minimizes the Frobenius norm necessarily has good correlation with every possible solution to the polynomial system.  Formally, they show the following.

\begin{lemma}\torestate{ \label{lem:FrobeniusMinimization}
(Comprehensive Pseudodistributions are Correlated with Inliers \cite{RY19,KKK19})
Let $\cP$ be a polynomial system in variables $\bw = \{w_i\}_{i \in [N]}$ and a set of indeterminates $\Theta$, that contains the set of inequalities:
$$\mathcal{P} = \begin{cases} 
\phantom{-} w_i^2 = w_i & \forall i \in [N]\\  

\phantom{-}\sum\limits_{j=1}^{N}w_j  = \alpha N\\

\end{cases}$$
Let $\pE_\zeta: \R[\{w_i\}_{i \in [N]}]^{\leq D} \to \R$ denote a degree $D$ pseudoexpectation that satisfies $\cP$ and minimizes the norm $\norm{\pE_\zeta[ w]}$.
If $\bw' = (w'_1,\ldots,w'_N)  \in \{0,1\}^N$ and $\Theta'$ is a satisfying assignment to $\cP$ then the $\pE_{\zeta}$ has non-negligible support on $\bw'$, 
\begin{equation}
\pE_\zeta\left[\frac{1}{\alpha N} \sum\limits_{i=1}^N w_iw_i' \right] \geq \alpha     
\end{equation}}
\end{lemma}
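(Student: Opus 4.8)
The plan is to exploit the optimality of the Frobenius-norm-minimizing pseudoexpectation via a first-order (convexity) argument. Fix the satisfying assignment $(\bw', \Theta')$ with $\bw' \in \{0,1\}^N$. Since $\delta \mapsto \pE_\zeta$ minimizes $\norm{\pE_\zeta[\bw]}^2$ over the convex set of degree-$D$ pseudoexpectations satisfying $\cP$, and since the point mass $\pE_{\bw'}$ (the actual expectation over the single solution $(\bw',\Theta')$) is itself a feasible pseudoexpectation, the whole segment $\pE_t \defeq (1-t)\pE_\zeta + t\,\pE_{\bw'}$ for $t \in [0,1]$ is feasible. The function $t \mapsto \norm{\pE_t[\bw]}^2$ is a convex quadratic in $t$ minimized at $t=0$, so its derivative at $0$ is nonnegative: $\iprod{\pE_\zeta[\bw],\, \pE_{\bw'}[\bw] - \pE_\zeta[\bw]} \geq 0$, i.e.\ $\iprod{\pE_\zeta[\bw], \bw'} \geq \norm{\pE_\zeta[\bw]}^2$, where I write $\pE_{\bw'}[\bw] = \bw'$ since $\bw'$ is an actual $0/1$ point.

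Next I would lower-bound the right-hand side. Using booleanness $w_i^2 = w_i$ inside $\cP$, each coordinate satisfies $\pE_\zeta[w_i] = \pE_\zeta[w_i^2] \geq 0$, and moreover $\pE_\zeta[w_i] \le 1$ (since $\pE_\zeta[w_i - w_i^2] = 0$ and $\pE_\zeta[(1-w_i)a^2]\ge 0$ type reasoning, or more simply $\pE_\zeta[w_i]=\pE_\zeta[w_i^2]\le \pE_\zeta[1]=1$ via Cauchy–Schwarz for pseudoexpectations). Hence $\norm{\pE_\zeta[\bw]}^2 = \sum_i \pE_\zeta[w_i]^2 \ge \frac{1}{N}\bigl(\sum_i \pE_\zeta[w_i]\bigr)^2 = \frac{1}{N}(\alpha N)^2 = \alpha^2 N$ by Cauchy–Schwarz and the constraint $\sum_i w_i = \alpha N$. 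Combining with the previous paragraph, $\iprod{\pE_\zeta[\bw], \bw'} \geq \alpha^2 N$, that is $\pE_\zeta\bigl[\sum_i w_i w_i'\bigr] = \sum_i w_i' \pE_\zeta[w_i] \geq \alpha^2 N$. Dividing by $\alpha N$ gives exactly $\pE_\zeta\bigl[\frac{1}{\alpha N}\sum_i w_i w_i'\bigr] \geq \alpha$, as claimed.

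The one subtlety, and the step I expect to require the most care, is justifying that the convex-optimization/first-order argument is valid at the level of pseudoexpectation \emph{functionals} rather than honest distributions: one must check that $\pE_{\bw'}$ — evaluation at a genuine feasible point — really is a degree-$D$ pseudoexpectation satisfying $\cP$ (it is, since it is an actual expectation), that the feasible set of pseudoexpectations satisfying $\cP$ is convex (immediate from linearity of the constraints in $\pE$), and that the minimizer exists (the feasible set is closed and the objective is coercive on the relevant coordinates, or one appeals to the SDP formulation). I would also note that the inequality $\pE_\zeta[w_i]\le 1$ needs the constraint set to include enough to push this through at degree $D$; if that is awkward, an alternative is to drop it and use only nonnegativity of $\pE_\zeta[w_i]$ together with $\sum_i \pE_\zeta[w_i] = \alpha N$, applying Cauchy–Schwarz as $\sum_i w_i'\pE_\zeta[w_i] \ge \frac{(\sum_i \pE_\zeta[w_i])^2}{\sum_i \pE_\zeta[w_i]/w_i'}$ — but the clean route above via $\pE_\zeta[w_i]\in[0,1]$ and the segment-optimality inequality is the one I would write up, since it mirrors the Hopkins–Steurer entropy-maximization technique cited as the origin of this lemma.
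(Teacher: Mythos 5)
Your proof is correct: the first-order optimality argument along the segment toward the point evaluation $\pE_{\bw'}$, giving $\iprod{\pE_\zeta[\bw],\bw'}\geq\norm{\pE_\zeta[\bw]}^2\geq\alpha^2 N$ via Cauchy--Schwarz, is exactly the standard argument for this lemma. The paper itself gives no proof and defers to \cite{RY19,KKK19}, where this is the argument used, so your route coincides with the intended one; the only cosmetic point is that the bound $\pE_\zeta[w_i]\le 1$ you discuss is not actually needed anywhere in the chain.
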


\paragraph{Rounding}
Assuming we have the moments of a distribution over solutions to the polynomial system, the goal of rounding is to extract each of the solutions to the system.
Both the previous works \cite{RY19,KKK19} employ the idea of conditioning SoS SDP relaxations towards rounding the SDP solution.
Intuitively, the idea is to pick a sample $Z_i$, and condition the pseudoexpectation on the sample $Z_i$ being an inlier, i.e., condition on the event that $w_i = 1$.

Formally, let $\pE : \R\left[ \bw, \Theta  \right] \to \R$ denote the pseudo-expectation functional on polynomials of degree at most $D+1$ in variables $\bw \seteq \{w_i\}_{i \in [N]}$ and $\Theta$.
Pseudo-expectation functionals (equivalently SoS SDP solutions), can be conditioned on low-degree events.  For example, for any $i \in [N]$, the conditioned pseudoexpectation functional $\pE[ \cdot | w_i = 1]$ is constructed as follows,
\begin{equation}
    \pE\left[ p(\bw,\Theta) | w_i = 1\right] \defeq \frac{\pE[p(\bw,\Theta) \cdot w_i]}{\pE[w_i]}  \text{ for all polynomials } p \in \R[\bw,\Theta] \text{ with }\deg(p) \leq D
\end{equation}
               
For a degree $D+1$ pseudoexpectation functional $\pE$, $\pE[ \cdot | w_i = 1]$ is a degree $D$ pseudoexpectation functional.

The two works \cite{RY19,KKK19} analyze the rounding schemes differently, and we follow the simpler analysis in \cite{KKK19}.
We are now ready to formally describe the list-decoding algorithm for certifiably resilient datasets.
\begin{algorithm}[H] \label{algo:listDecoding}
\SetAlgoLined
\KwResult{$\Theta \in \R^m$ such that with probability atleast $\alpha$, $\norm{\Theta - \Theta^*} \leq \delta$}
 \textbf{Inputs}: Parameters $\alpha, \epsilon$ and a corrupted data set $\cD = \{\tilde{Z}_i\}_{i=1}^N$ with $\alpha N$ samples from a $(\alpha,\epsilon,\delta)$-resilient dataset $\{Z_1,\ldots,Z_N\}$, and it admits a degree $D$ SoS certificate of resilience.
 
 Compute the degree $D+1$ pseudo-expectation functional $\pE_{\zeta} : \R[\bw, \Theta] \to \R$ by solving the following SDP. 
\begin{align}
& \underset{\text{degree D pseudoexpectations} \pE}{\text{minimize}}
& & \sum_{i=1}^N \pE[w_i]^2 \\
& \underset{\text{satisfies the polynomial system}}{\text{such that $\pE$}}
&  & (w_i^2 - w_i) = 0, \; i \in [N]\\
& & & \sum_{i=1}^N w_i   \geq \alpha N, \; i \in [N]\\
& & & \sum_{i = 1}^N w_i \Phi(\tilde{Z}_i,\Theta) \leq \epsilon \cdot \left(\sum_{i = 1}^n w_i\right)\\
& & & q(\Theta) \geq 0  \qquad \qquad \forall q \in \cV
\end{align}
\textbf{Rounding:} Sample $i \in [N]$ with probability proportional to $\pE_{\zeta}[w_i]$ and return $\frac{\pE_{\zeta}[w_i \Theta]}{\pE_{\zeta}[w_i]}$
 \caption{list Decoding}
\end{algorithm}

We defer the proof of the following theorem to the appendix.
\begin{theorem} \label{thm:framework-algo}
Fix $\alpha \in (0,1]$ and $\epsilon, \delta > 0$.
Let $Z_1,\ldots,Z_N \in \R^{d}$ be samples that were $(\alpha^2,\epsilon,\delta)$-resilient, and there is a SoS certificate of resilience of degree $D$.  There is an algorithm $\cA$ running in time $O(Nd^{O(D)})$ such that with probability at least $\Omega(\alpha)$, the algorithm outputs $\Theta$ such that $\norm{\Theta - \Theta^*}^2 \leq O(\delta/\alpha^4)$
\end{theorem}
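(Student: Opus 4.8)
The plan is to follow the rounding-by-conditioning strategy implicit in \cite{KKK19}, organised around the resilience certificate \eqref{eq:sosCertificate} and the Frobenius-minimizing pseudoexpectation. Fix the (unknown) uncorrupted set $\cS$, so $\tilde Z_i = Z_i$ for $i\in\cS$ and $\abs{\cS}\ge \alpha N$; after discarding elements if necessary, assume $\abs{\cS}=\alpha N$. First I would observe that the integral point $(\bone_\cS,\Theta^*)$ is feasible for the SDP of \pref{algo:listDecoding}: booleanness and $\sum_i w_i\ge\alpha N$ are immediate, $q(\Theta^*)\ge 0$ since $\Theta^*\in\cV$, and the cost constraint holds because on $\cS$ the data is uncorrupted and $\Theta^*$ has average cost at most $\epsilon$ there. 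So the Frobenius minimizer $\pE_\zeta$ exists, and \pref{lem:FrobeniusMinimization} applied with $\bw'=\bone_\cS$ gives $\sum_{i\in\cS}\pE_\zeta[w_i]\ge\alpha^2 N$. I would also keep the sharper variational fact underlying that lemma: writing $p=\pE_\zeta[w]$, minimality of $\norm{p}^2$ over the convex set of feasible moment vectors (which contains $\bone_\cS$) gives $\sum_{i\in\cS}p_i=\iprod{p,\bone_\cS}\ge\norm{p}^2\ge\Paren{\sum_i p_i}^2/N$.

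The crux is transporting the resilience certificate --- a polynomial identity about the \emph{uncorrupted} data --- onto $\pE_\zeta$, which only satisfies the corrupted system. I would substitute the restricted weights $w_i'\defeq w_i\cdot\bone[i\in\cS]$ into the degree-$D$ certificate \eqref{eq:sosCertificate} for $(\alpha^2,\epsilon,\delta)$-resilience (so that $\Psum$ becomes $\sum_{i\in\cS}w_i-\alpha^2 N$ and $\Pcost$ becomes $\epsilon N-\sum_{i\in\cS}w_i\Phi(Z_i,\Theta)$) and apply $\pE_\zeta$ to the resulting identity, checking term by term that it stays an inequality: the $\Pbool(w_i')$ terms vanish since $w_i'=w_i$ on $\cS$ (an equality constraint) and $w_i'\equiv 0$ off $\cS$; the SoS multiples of the $q(\Theta)\ge 0$ are nonnegative; the SoS multiple of $\epsilon N-\sum_{i\in\cS}w_i\Phi(Z_i,\Theta)$ is nonnegative because $\Phi\ge 0$ and $\tilde Z_i=Z_i$ on $\cS$ make $\sum_{i\in\cS}w_i\Phi(Z_i,\Theta)\le\sum_i w_i\Phi(\tilde Z_i,\Theta)\le\epsilon\sum_i w_i\le\epsilon N$ SoS-provably from the corrupted system; and --- this is the one place the restriction that $\Psum$ carries a scalar coefficient (cf.\ the Remark on the coefficient of $\Psum$) is essential --- the term $\lambda\cdot\Paren{\sum_{i\in\cS}w_i-\alpha^2 N}$ has nonnegative $\pE_\zeta$-value since $\lambda\ge 0$ and $\sum_{i\in\cS}\pE_\zeta[w_i]\ge\alpha^2 N$; were $\lambda$ an SoS polynomial this step would collapse. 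The upshot is $\pE_\zeta\Brac{\Paren{\sum_{i\in\cS}w_i}\norm{\Theta-\Theta^*}^2}\le\delta N$.

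Finally the rounding. With $p_k=\pE_\zeta[w_k]$ and $\pE^{(k)}=\pE_\zeta[\,\cdot\mid w_k=1\,]$, the bound reads $\sum_{k\in\cS}p_k\,\pE^{(k)}\brac{\norm{\Theta-\Theta^*}^2}\le\delta N$; since the output on $k$ is $\Theta_k=\pE^{(k)}[\Theta]$ and pseudoexpectation Cauchy--Schwarz gives $\norm{\Theta_k-\Theta^*}^2\le\pE^{(k)}\brac{\norm{\Theta-\Theta^*}^2}$, we get $\sum_{k\in\cS}p_k\norm{\Theta_k-\Theta^*}^2\le\delta N$. Setting $M=\sum_{k\in\cS}p_k\ge\norm{p}^2\ge\alpha^2 N$, Markov over $\cS$ shows that at least $p$-mass $M/2\ge\norm{p}^2/2$ sits on indices $k\in\cS$ with $\norm{\Theta_k-\Theta^*}^2\le 2\delta N/M\le 2\delta/\alpha^2$; the rounding picks such an index with probability at least $\tfrac12\norm{p}^2/\sum_i p_i\ge\tfrac12\Paren{\sum_i p_i}/N\ge\alpha/2$ by the variational fact and $\sum_i p_i\ge\alpha N$. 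This yields the theorem (in fact with error $O(\delta/\alpha^2)$, which implies the stated $O(\delta/\alpha^4)$). I would take $\pE_\zeta$ of degree $D+O(1)$ so that the substitution, the conditioning, and the Cauchy--Schwarz all stay within its degree, and the running time is $Nd^{O(D)}$ by the SDP-solver fact. The step I expect to need the most care is precisely the term-by-term verification that the substituted certificate remains a genuine inequality under $\pE_\zeta$ --- where non-negativity of $\Phi$ and the scalar-coefficient convention on $\Psum$ both get used, and where one must confirm that passing from the corrupted to the true cost constraint on $\cS$ is itself SoS-provable.
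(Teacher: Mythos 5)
Your proposal is correct and follows essentially the same route as the paper: restricting the pseudoexpectation to $\cS$ (your substitution $w_i'=w_i\cdot\bone[i\in\cS]$ is exactly the paper's operator $\starE$), verifying the resilience certificate term by term using $\Phi\ge 0$, the corrupted cost constraint, and the scalar coefficient on $\Psum$ together with \pref{lem:FrobeniusMinimization}, and then rounding by conditioning with pseudoexpectation Cauchy--Schwarz. Your final accounting via Markov and the variational inequality $\iprod{p,\bone_\cS}\ge\norm{p}^2$ is slightly sharper than the paper's (error $O(\delta/\alpha^2)$ with success probability $\Omega(\alpha)$ rather than $O(\delta/\alpha^4)$ with probability $\alpha^2$), but this is a refinement of the same argument, not a different one.
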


\section{Subspace Recovery}

In this section, we will setup the list-decodable subspace recovery problem and use the framework of certifiable resilience to devise an algorithm for it.
%

Let $\cD$ be a probability distribution over $\R^d$.  For the sake of exposition, we will assume $\cD = N(0, \Id_d)$ but the discussion can be generalized to any SoS-anticoncentrated distribution over $\R^d$.
Let $P$ denote the projection to a $k$-dimensional subspace over $\R^d$.
The uncorrupted data consists of points that are close to projection of $\cD$ to the subspace $P$.  Formally, the uncorrupted data consists of examples $X_i$ of the form, 
$ X_i = P X_i^* + \gamma_i $
where $X_i^*$ is drawn from $\cD$ and $\gamma_i$ is an additive noise.  We will assume that the additive noise is bounded variance in that
$$\sum_{i = 1}^N \norm{\gamma_i}^2 \leq \epsilon N$$

The input to the algorithm consists of $N$ samples $\{ \tilde{X}_i \}_{i \in [N]} \in \R^d$, an $\alpha$-fraction of which are equal to $X_i$.  Specifically, there exists some subset $\cS \in [N]$ such that, $\tilde{X}_i = X_i$ for all $i \in [N]$.  The goal of the list-decoding algorithm is to return a small list of $k$-dimensional subspaces $\{\Pi_1,\ldots,\Pi_\ell\}$ such that at least one of them is close to $P$.

The parameter being estimated here is the $k$-dimensional projection matrix $\Pi$.  
The set of $k$-dimensional projections can be specified by the following set of polynomial equalities in $d \times k$ matrix of . indeterminates $U$
\begin{align*}
    UU^T = \Pi \\
    U^T U = \Id_k 
\end{align*}

A natural estimator for the subspace $P$, if the data were completely uncorrupted would be
\[ \Pi^* = \argmin_{\Pi} \sum_{i \in [N]} \norm{X_i - \Pi X_i}^2 \]
corresponding to the cost function $\Phi(X_i,\Pi) = \norm{X_i - \Pi X_i}^2$.
Thus the polynomial system associated with subspace recovery is given by,
%
%
\begin{equation}\label{eq:polysystemsubspace}
\left\{  \begin{array}{lllr}
\Psum(\bw) & \defeq    \sum_{i \in [N]}  w_i - \alpha N   & \geq 0 \\
\Pbool(w_i) & \defeq     w_i^2 - w_i & = 0& \forall i \in [N] \\
\Pcost(\bw) & \defeq  \epsilon N -   \sum_{i \in [N]} w_i \norm{X_i - \Pi X_i}^2   & \geq 0 \\
\Pi = UU^T \\
U^T U = \Id_k
\end{array} \right\} 
\end{equation}

Through the framework of the previous section, devising an algorithm for subspace recovery (proving \pref{thm:main-alg}) reduces to showing that the data set is certifiably resilient.  
We will sketch the proof of certifiable resilience in this section, and defer the proof of \pref{thm:main-alg} to the Appendix.

\begin{theorem} (SoS certifiable resilience for subspace recovery) \label{thm:subspace-resilience}
For all $\alpha \in (0,1), \eta \in (0,1/2)$ and $\epsilon > 0 $, suppose $\cD$ is a $(c,D(\eta))$-SoS anticoncentrated distribution over $\R^d$ then with high probability, the data set $\{X_1,\ldots,X_N\}$ can be certified to be $(\alpha, \epsilon,\delta)$-resilient by a degree $D(\eta)+4$ SoS certificate for $\delta = \left(\frac{4\epsilon}{\eta^2 \alpha} + \frac{4ck\eta}{\alpha} \right)$.
\end{theorem}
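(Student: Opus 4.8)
The plan is to produce the resilience certificate \eqref{eq:sosCertificate} explicitly, with the target parameter $\Theta^*$ set to the true projection $P$, by assembling three ingredients: a low-degree reduction of the cost constraint $\Pcost$, an algebraic rewriting of $\norm{\Pi - P}^2$ as a quadratic form in the data, and the hypothesized SoS-certifiable anti-concentration of the sample set, the last being the real content. Throughout, everything is a polynomial identity in the indeterminates $\bw = \{w_i\}_{i \in [N]}$ and the $d \times k$ matrix $U$ (with $\Pi = UU^T$); the sample points $X_i^*$ and noise vectors $\gamma_i$ are fixed real constants, so using them as coefficients costs no degree, and the only probabilistic ingredient will be the existence of an SoS anti-concentration certificate for the empirical set $\{X_i^*\}$.

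First I would reduce the cost constraint to a ``noiseless'' bound. From $U^T U = \Id_k$ one certifies $0 \sle \Id - \Pi \sle \Id$, and from $\Pbool$ one certifies $0 \le w_i \le 1$. Writing $X_i - \Pi X_i = (\Id - \Pi)(P X_i^* + \gamma_i)$ and using the elementary (degree $2$) SoS facts $\norm{a+b}^2 \ge \tfrac12 \norm{a}^2 - \norm{b}^2$ and $\gamma_i^T(\Id - \Pi)\gamma_i \le \norm{\gamma_i}^2$, one gets, modulo $\Pbool$, that $w_i \norm{X_i - \Pi X_i}^2 \ge \tfrac12 w_i \norm{(\Id-\Pi)P X_i^*}^2 - \norm{\gamma_i}^2$; summing over $i$ and invoking $\Pcost$ together with the noise budget $\sum_i \norm{\gamma_i}^2 \le \epsilon N$ gives, as a consequence of degree at most $4$, that $\sum_i w_i \norm{(\Id-\Pi) P X_i^*}^2 \le 4\epsilon N$. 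Next is the Frobenius rewriting: setting $M \defeq P(\Id - \Pi)P \sge 0$, the constraint $U^T U = \Id_k$ yields the polynomial identities $\norm{\Pi - P}^2 = 2\bigl(k - \Tr(\Pi P)\bigr) = 2\Tr(M)$ and, by idempotence of $\Id - \Pi$, $\norm{(\Id-\Pi)P X_i^*}^2 = (X_i^*)^T M X_i^*$, while $\Tr(M) = k - \norm{PU}^2 \le k$ and $0 \sle M \sle \Id$ are likewise SoS-certifiable. Hence the goal shrinks to a low-degree SoS proof of the lower bound $\sum_i w_i (X_i^*)^T M X_i^* \ge \eta^2 \bigl(\sum_i w_i\bigr)\Tr(M) - (\text{slack})$.

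This last step is the crux, and is where the $(c, D(\eta))$-SoS anti-concentration of $\cD$ enters. I would invoke the fact — following \cite{RY19, KKK19} and formalized in the appendix — that, with high probability over $\{X_i^*\}$ and provided $N$ is at least polynomial in $d^{D(\eta)}$ (as in the hypothesis of \pref{thm:main-alg}), the empirical sample set admits a degree-$D(\eta)$ SoS proof in a formal direction variable $v$ of $\sum_i w_i \inner{X_i^*, v}^2 \ge \eta^2 \bigl(\sum_i w_i\bigr)\norm{v}^2 - O(\eta^3 N)\norm{v}^2$; this is obtained from the one-dimensional anti-concentration polynomial $p_\eta$ of $\cD$ by homogenization and a net argument, the booleanness bound $0 \le w_i \le 1$, and the fact that at most $O(\eta N)$ samples have $\inner{X_i^*, v}^2$ below the threshold $\eta^2\norm v^2$. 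Equivalently this is the matrix inequality $\sum_i w_i X_i^*(X_i^*)^T \sge \eta^2 \bigl(\sum_i w_i\bigr)\Id - O(\eta^3 N)\,\Id$ in the SoS sense; writing the PSD matrix $M$ as a sum of dyads $\sum_c \beta_c\beta_c^T$ with $\beta_c$ polynomial in $U$ (obtained from the rows of $(\Id-\Pi)P$, so that $\sum_c\beta_c\beta_c^T = M$), substituting $v = \beta_c$ in the above SoS proof and summing over $c$ gives $\sum_i w_i (X_i^*)^T M X_i^* \ge \eta^2 \bigl(\sum_i w_i\bigr)\Tr(M) - O(\eta^3 N)\Tr(M)$. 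The delicate part — which I expect to be the main obstacle — is carrying out this passage from a single direction to the matrix $M$ entirely within a certificate of the restricted shape \eqref{eq:sosCertificate}, in which $\Psum$ may be multiplied only by a scalar (rather than merely reasoning about pseudo-expectations).

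Finally I would assemble the certificate. Combining the cost reduction with the matrix lower bound gives $\eta^2 \bigl(\sum_i w_i\bigr)\Tr(M) \le 4\epsilon N + O(\eta^3 N)\Tr(M)$; bounding $O(\eta^3 N)\Tr(M) \le O(\eta^3 N)k$ via the SoS fact $\Tr(M) \le k$, dividing through by the constant $\eta^2$, and substituting $\norm{\Pi - P}^2 = 2\Tr(M)$ yields $\bigl(\sum_i w_i\bigr)\norm{\Pi - P}^2 \le \tfrac{8\epsilon}{\eta^2} N + O(\eta k) N$. Since one cannot divide the certificate through by $\sum_i w_i$ (only a scalar multiple of $\Psum$ may be added), the factors $\tfrac1\alpha$ in the stated $\delta = \tfrac{4\epsilon}{\eta^2 \alpha} + \tfrac{4ck\eta}{\alpha}$ are carried as harmless extra slack — for $\alpha \le \tfrac12$ they are implied by the bound above. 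Tracking degrees, the anti-concentration proof contributes degree $D(\eta)$ and the remaining algebra a constant more, giving a degree-$(D(\eta)+4)$ certificate; the sole randomness lies in the existence of the empirical anti-concentration certificate, which holds with high probability for $N$ as assumed, so the entire certificate exists with high probability.
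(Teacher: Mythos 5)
Your first two steps (stripping the additive noise to get $\sum_i w_i \norm{(\Id-\Pi)PX_i^*}^2 \le 4\epsilon N$, and rewriting $\norm{P-\Pi}_F^2$ in terms of $\Tr\bigl(P(\Id-\Pi)P\bigr)$) match the paper's proof essentially line for line, and your worry about the restricted shape of the certificate is a non-issue: the paper's derivation multiplies $\Psum$ only by a scalar, once, at the very end. The genuine gap is in the third step, the passage from one direction to the matrix $M = P(\Id-\Pi)P$. The certifiable anti-concentration hypothesis (\pref{def:anticoncentration}) gives, per direction $v$ with $\norm{v}^2\le\rho^2$, the bound $\sum_i w_i\iprod{X_i^*,v}^2 \ge \eta^2(\sum_i w_i)\norm{v}^2 - \eta^2\tau\rho^2 N$, where the additive error is proportional to the \emph{a priori} bound $\rho^2$, not to $\norm{v}^2$. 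So your claimed ``equivalent'' matrix inequality $\sum_i w_i X_i^*(X_i^*)^T \sge \eta^2(\sum_i w_i)\Id - O(\eta^3 N)\Id$, and hence the conclusion that the total slack is $O(\eta^3 N)\Tr(M)$, does not follow: summing the scalar inequality over your proposed dyad decomposition of $M$ into the $d$ rows of $(\Id-\Pi)P$ accrues $d$ copies of the \emph{constant} error $c\eta^3\rho^2 N$, yielding a final bound with $d\eta$ in place of $k\eta$ --- a loss of $d/k$ that defeats the theorem.

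The paper's fix is to never decompose $M$ into $d$ dyads. Instead it fixes an orthonormal basis $e_1,\ldots,e_d$ with $\Span\{e_1,\ldots,e_k\} = \range(P)$ and uses only the $k$ directions $v_j = (P-P\Pi)e_j$: one has $\norm{(P-\Pi P)X_i^*}^2 = \sum_{\ell=1}^d\iprod{(P-P\Pi)e_\ell,X_i^*}^2 \ge \sum_{j=1}^k\iprod{(P-P\Pi)e_j,X_i^*}^2$ (dropping the remaining $d-k$ squares is a valid SoS step), each $\norm{v_j}^2\le 4$, and crucially $\sum_{j=1}^k\norm{(P-P\Pi)e_j}^2 = \norm{P-P\Pi P}_F^2$, which \pref{lem:mylemma1} converts to $\norm{P-\Pi}_F^2$. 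Applying the anti-concentration certificate to these $k$ directions only, the total additive error is $4ck\eta^3 N$, which after dividing by $\eta^2$ gives the $4ck\eta$ term in $\delta$. If you want to keep your cleaner identity $\norm{P-\Pi}_F^2 = 2\Tr(M)$, you still need to lower-bound $\sum_i w_i(X_i^*)^T M X_i^*$ using at most $k$ invocations of the one-direction certificate; the basis-of-$\range(P)$ trick is precisely what accomplishes this.
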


\begin{proof}
For a sample $X_i = P X_i^* + \gamma_i$, we have $X_i - \Pi X_i = (PX_i^* - \Pi P X_i^*)+ (\gamma_i - \Pi \gamma_i)$.  We can rewrite it as,
\[ PX_i^* - \Pi P X_i^* = (X_i - \Pi X_i) - (\gamma_i - \Pi \gamma_i) \]
Using $(a+b)^2 \leq 2 a^2 + 2 b^2$ we get that,
\begin{align*}
    \norm{P X^*_i - \Pi P X^*_i}^2 & = 2 \norm{X_i - \Pi X_i}^2 + 2 \norm{\gamma_i -\Pi \gamma_i}^2 \leq 2 \norm{X_i -\Pi X_i}^2 + 2 \norm{\gamma_i}^2
\end{align*}
where the last inequality uses the fact that $\Pi^2 = \Pi$.  Hence we get that,
\begin{align} \label{eq:myeq23}
    \sum_{i \in [N]} w_i \norm{P X^*_i - \Pi P X^*_i}^2 & \leq 2 \sum_{i \in [N]} w_i \norm{X_i - \Pi X_i}^2 + w_i \norm{\gamma_i}^2  \leq 2  \epsilon N + 2\epsilon N.
\end{align}
where the second inequality uses $\Pcost(\bw) \geq 0$, $w_i \leq 1$ for all $i$ and $\sum_i \norm{\gamma_i}^2 \leq \epsilon N$.
Fix an orthonormal basis $e_1,\ldots, e_d$ such that $\Span\{e_1,\ldots,e_k\} = P$.
\begin{align*}
 \norm{(P - \Pi P) X^*_i }^2
 = \sum_{\ell = 1}^d \langle  (P - P \Pi )e_\ell , X^*_i \rangle^2 \geq \sum_{j = 1}^k \langle  (P - P \Pi )e_\ell , X^*_i \rangle^2 
\end{align*}
Using \eqref{eq:myeq23} we get that,
\begin{equation} \label{eq:myeq1}
 \sum_{i \in [N]} w_i    \sum_{\ell = 1}^k \langle  (P - P \Pi )e_\ell , X^*_i \rangle^2 \leq 4\epsilon N
\end{equation}

Suppose $v_j \seteq (P-P\Pi)e_j$, then its norm $\norm{v_j}^2 = \norm{(P - P\Pi)e_j}^2 \leq 2 \norm{Pe_j}^2 + 2 \norm{ P\Pi e_j}^2 \leq 4$.  Since $X_i^*$ is drawn from a $(c,D(\eta))$-anticoncentrated distribution, there is a degree $D(\eta)$ SoS derivation for 
\begin{align*}
\sum_{j \in [N]} w_i \iprod{v_j, X_i^*}^2 \geq \eta^2 \left(\sum_{i \in [N]} w_i\right) \norm{v_j}^2 - 4 c \eta^3 N
\end{align*}
Summing the above inequality over all $j = 1 \ldots k$ we get that,

\begin{align*}
\sum_{\ell = 1}^k \sum_{i \in [N]} w_i \iprod{(P-P\Pi)e_\ell, X_i^*}^2 \geq \eta^2 \left(\sum_{i \in [N]} w_i\right) \left(\sum_{\ell} \norm{(P-P\Pi)e_\ell}^2\right) - 4 c k\eta^3 N
\end{align*}

In conjunction with \eqref{eq:myeq1}, this implies that,
\begin{align*}
 \left(\sum_{i \in [N]} w_i\right) \left(\sum_{\ell} \norm{(P-P\Pi)e_\ell}^2\right) \leq \frac{1}{\eta^2} \sum_{\ell = 1}^k \sum_{i \in [N]} w_i \iprod{(P-P\Pi)e_\ell, X_i^*}^2 + 4 c k\eta N  \leq \left(\frac{4\epsilon}{\eta^2} + 4ck \eta\right)N
\end{align*}
Note that $\sum_{\ell = 1}^k \norm{(P-P\Pi)e_\ell}^2 = \norm{P-P\Pi P}_F^2$.  Using the constraint $\Psum(\bw)$, we can rewrite the above equation as,
\begin{align*}
 \left(\sum_{i \in [N]} w_i\right)\norm{P-P\Pi P}_F^2  \leq \left(\frac{4\epsilon}{\eta^2 \alpha} + \frac{4ck \eta}{\alpha}\right) \left(\sum_{i \in [N]} w_i\right) 
 \end{align*}
By \pref{lem:mylemma1} (see Appendix for proof) this implies that,
\begin{align*}
 \left(\sum_{i \in [N]} w_i\right)\norm{P-\Pi}_F^2  \leq \left(\frac{4\epsilon}{\eta^2 \alpha} + \frac{4ck \eta}{\alpha}\right) \left(\sum_{i \in [N]} w_i\right) 
 \end{align*}
\end{proof}

\begin{lemma} \label{lem:mylemma1}
For a $k$-dimensional projection matrix $P \in \R^{d \times d}$ and a $d \times k $ matrix of indeterminates $U$ and $\gamma > 0$,
\[ 
\left\{\begin{array}{l}
     (\sum_i w^2_i) \cdot \norm{P - P\Pi P}_{F}^2 \leq k \gamma (\sum_{i} w^2_i) \\
     \Pi= UU^T \\
     U^T U = \Id_k \\
     \end{array}\right\} \SoSp{4} 
\left(\sum_i w^2_i\right) \cdot \norm{P-\Pi}_F^2 \leq k \gamma \left(\sum_i w^2_i\right)
\]
\end{lemma}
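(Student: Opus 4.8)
The plan is to first strip off the weights: since $\sum_i w_i^2$ is a fixed sum-of-squares polynomial, it is enough to produce an SoS proof of the underlying matrix inequality $\norm{P - \Pi}_F^2 \le k\gamma$ from the hypothesis $\norm{P - P\Pi P}_F^2 \le k\gamma$ together with $\Pi = UU^\top$ and $U^\top U = \Id_k$, and then multiply the entire certificate through by $\sum_i w_i^2$. The real content is then a statement about the $k \times k$ matrix $M \defeq U^\top P U$, i.e.\ the compression of the fixed projection $P$ onto the column span of $U$.

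The first step is to rewrite both Frobenius norms in terms of $M$ using only $P^\top = P$, $P^2 = P$, $\Pi^2 = UU^\top UU^\top = UU^\top = \Pi$ and $\Tr(P\Pi) = \Tr(U^\top PU) = \Tr(M)$. This gives the degree-$2$ SoS identity $\norm{P-\Pi}_F^2 = 2k - 2\Tr(M)$ and, after also using $\Tr(P\cdot P\Pi P) = \Tr(M)$ and $\Tr\bigl((P\Pi P)^2\bigr) = \Tr\bigl((U^\top PU)^2\bigr) = \Tr(M^2)$, the degree-$4$ SoS identity $\norm{P - P\Pi P}_F^2 = k - 2\Tr(M) + \Tr(M^2) = \Tr\bigl((\Id_k - M)^2\bigr)$. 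Alongside these I would record that $M$ and $\Id_k - M$ are genuine squares --- $M = (PU)^\top(PU)$ and $\Id_k - M = \bigl((\Id_d - P)U\bigr)^\top\bigl((\Id_d - P)U\bigr)$ (using $P = P^\top P$, $\Id_d - P = (\Id_d - P)^\top(\Id_d - P)$, $U^\top U = \Id_k$) --- so that $0 \preceq M \preceq \Id_k$ is available as low-degree SoS facts.

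What remains, and what I expect to be the main obstacle, is a scalar SoS inequality: with $N \defeq \Id_k - M$ (a PSD SoS matrix satisfying $0 \preceq N \preceq \Id_k$), one must pass from $\Tr(N^2) \le k\gamma$ to $2\Tr(N) \le k\gamma$. The natural device is a power-mean/Cauchy--Schwarz step --- $\Tr(N)^2 \le k\,\Tr(N^2)$ holds as an SoS identity in the entries of $N$, since $k\Tr(N^2) - \Tr(N)^2 = \sum_{i<j}(N_{ii} - N_{jj})^2 + k\sum_{i \ne j}N_{ij}^2$, and $N \preceq \Id_k$ supplies $\Tr(N) \le k$. The subtlety is that this only directly controls the linear quantity $\Tr(N)$ by $\sqrt{\Tr(N^2)}$, which is not a polynomial, so one has to use the boundedness $N \preceq \Id_k$ to clear the square root and then check that the constant coming out matches the claimed bound $k\gamma$; this constant bookkeeping is the delicate part, and it is also what forces the certificate to degree $4$, since $\Tr(N^2)$ and the cross terms $\Tr\bigl((\Id_k - N)N\bigr)$ entering the argument are quadratic in $M$ and hence quartic in $U$.
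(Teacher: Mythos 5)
Your reduction is correct and in fact cleaner than the paper's: with $M = U^TPU$ and $N = \Id_k - M$ one indeed has $\norm{P-\Pi}_F^2 = 2\Tr(N)$ and $\norm{P - P\Pi P}_F^2 = \Tr(N^2)$, with $0 \preceq N \preceq \Id_k$ certified by the squares you exhibit. But the step you defer as ``constant bookkeeping'' is not bookkeeping --- it is where the lemma fails. The implication $\Tr(N^2) \le k\gamma \Rightarrow 2\Tr(N) \le k\gamma$ is false for real matrices with $0 \preceq N \preceq \Id_k$, so by soundness no SoS certificate of any degree can exist. Concretely, take $d=2$, $k=1$, $P = e_1e_1^T$, $U = (\cos\theta,\sin\theta)^T$: then $\norm{P - P\Pi P}_F^2 = \sin^4\theta$ while $\norm{P-\Pi}_F^2 = 2\sin^2\theta$, so with $\gamma = \sin^4\theta$ the hypothesis holds with equality and the conclusion demands $2\sin^2\theta \le \sin^4\theta$, which fails for every $\theta \in (0,\pi/2)$. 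The best extractable bound is exactly your Cauchy--Schwarz step $\Tr(N) \le \sqrt{k\,\Tr(N^2)}$, giving $\norm{P-\Pi}_F^2 \le 2\sqrt{k}\,\norm{P - P\Pi P}_F \le 2k\sqrt{\gamma}$; the square root cannot be cleared using $N \preceq \Id_k$ (the extremal case has all eigenvalues of $N$ equal to $\sqrt{\gamma}$), and it propagates into \pref{thm:subspace-resilience} and \pref{thm:main-alg} as a $\sqrt{\eta}$ rather than $\eta$ dependence.

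For what it is worth, the paper's own proof founders on the same rock, only less visibly. Starting from the identity $2P\Pi P = P^2 + (P\Pi P)^2 - (P - P\Pi P)^2$, it substitutes $A\,\Tr[(P\Pi P)^2] = Ak$ citing \pref{lem:ppip}; but that lemma only certifies $\Tr[(P\Pi P)^2] = \Tr(M^2) \le k$, whereas the direction needed to lower-bound $2\Tr(P\Pi P)$ and hence upper-bound $\norm{P-\Pi}_F^2$ is $\Tr(M^2) \ge k$, which is false (it equals $\cos^4\theta$ in the example above). Your write-up, by honestly isolating the residual scalar inequality, exposes the defect that the paper's chain of equalities hides. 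The correct repair is to weaken the conclusion to $\left(\sum_i w_i^2\right)\norm{P-\Pi}_F^2 \le 2k\sqrt{\gamma}\left(\sum_i w_i^2\right)$ --- equivalently, to pass through $\langle \Pi, P\rangle \ge k(1-\sqrt{\gamma})$ and invoke \pref{lem:eigenspace} --- which your approach does deliver as a degree-$4$ SoS derivation, since $k\Tr(N^2) - \Tr(N)^2$ is an explicit sum of squares and $k\sqrt{\gamma} - \Tr(N) = \tfrac{1}{2k\sqrt{\gamma}}\bigl(k^2\gamma - \Tr(N)^2\bigr) + \tfrac{1}{2k\sqrt{\gamma}}\bigl(k\sqrt{\gamma} - \Tr(N)\bigr)^2$.
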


The algorithm produced by the framework in previous section will output a matrix $\Pi_i = \frac{\pE[w_i \Pi]}{\pE[w_i]}$.  The output matrix $\Pi_i$ satisfies $0 \preceq \Pi_i \preceq I$ and $\Tr[\Pi_i] = k$, but is not necessarily a projection matrix.  In order to recover a projection matrix, we will have to round the matrix further into one.  The following lemma shows that just picking the projection onto the top $k$ eigenvalues of $\Pi_i$ yields a projection matrix with only a constant factor loss in the error.

\begin{lemma} \label{lem:eigenspace}
Let $\Pi$ be a matrix satisfying $\Tr (\Pi) = k$ and $\Pi \preceq I$.  Let $P$ be a rank $k$ projection.  If $\langle \Pi, P \rangle \ge k (1- \varepsilon)$, then 
$\langle \Pi_k, P \rangle \ge k (1- 2\varepsilon)$ where $\Pi_k$ is the top $k$ eigenspace of $\Pi$. Equivalently $\norm{P - \Pi_k}_F^2 \leq 4\epsilon$
\end{lemma}

\begin{proof} 
Let $\lambda_1,...,\lambda_d$ and $v_1,...,v_d$ be the eigenvalues and eigenvectors of $\Pi$.  Then $\langle \Pi, P \rangle \ge k (1- \varepsilon)$ imlies  $\sum_{j=1}^k \lambda_j \ge k (1- \varepsilon)  \text{ and therefore} \sum_{j=k+1}^d \lambda_j \le  \varepsilon k$.  Thus we have the following series of inequalities.  
\begin{align*} 
k (1- \varepsilon) & \le \langle \Pi, P \rangle  =  \sum_{j=1}^d \lambda_j \langle  P, v_j v_j^T \rangle =
\sum_{j=1}^k \lambda_j \langle  P, v_j v_j^T \rangle + \sum_{j=k+1}^d \lambda_j \langle  P, v_j v_j^T \rangle 
\end{align*}

\begin{align*}
\leq \sum_{j=1}^k \lambda_j \langle  P, v_j v_j^T \rangle + \sum_{j=k+1}^d \lambda_j  \le \sum_{j=1}^k \lambda_j \langle  P, v_j v_j^T \rangle +  \varepsilon k
\le \sum_{j=1}^k  \langle  P, v_j v_j^T \rangle +  \varepsilon k= \langle \Pi_k, P \rangle +  \varepsilon k.
\end{align*} 
Rearranging, we obtain, $\langle \Pi_k, P \rangle \ge k (1- 2\varepsilon)$. 

\end{proof}

\newcommand{\ur}{\underline{r}}

\newcommand{\Gc}{\tilde G}
\newcommand{\Ac}{\tilde A}
\newcommand{\Hcmu}{\tilde H^\mu}
\newcommand{\Acmu}{\tilde A^\mu}
\newcommand{\Gcmu}{\tilde G^\mu}
\newcommand{\Xcmu}{\tilde X^\mu}
\newcommand{\Hmu}{H}
\newcommand{\Amu}{A}
\newcommand{\Gmu}{G}
\newcommand{\Xmu}{X}
\newcommand{\Dmumuc}{D}

\newcommand{\Gammamu}{\Gamma_\mu}
\newcommand{\Gammamup}{\Gamma_{\mu+1}}

\newcommand{\goodev}{\Xi}
\newcommand{\badev}{\Xi^c}
\newcommand{\indgoodmu}{\phi_\mu}
\newcommand{\indbadmu}{\bar\phi_\mu}
\newcommand{\indgood}{\phi}
\newcommand{\indbad}{\bar\phi}
\newcommand{\chimu}{\chi_\mu}

\newcommand{\hfe}{{\bf \hat e}}

\newpage

\section{Hardness of List Decodable Subspace Recovery over Hypercube}

We construct a dataset for which list decodable subspace recovery is impossible.  The argument is straightforward and follows from a Gilbert-Varshamov style lower bound.   Our inliers will be distributed in a $k$ dimensional subspace according to the uniform distribution over the boolean hypercube. This innocuous setup turns out to be impossible to list decode even when the outliers are arranged in a benign mixture model distributed uniformly over the boolean hypercube in $\frac{1}{\alpha}$ orthogonal subspaces.  For simplicity of presentation, we will assume that each corner of the hypercube is populated by the same number of points.  A key takeaway of our lower bound is that modeling inliers as a standard normal over a planted subspace, in addition to being a popular statistical choice, is in a sense necessary.  Distributions that form distinct clumps of points are subject to pathologies of interpolation, a problem that is mitigated for points that are anticoncentrated.         

\begin{lemma} \label{lem:lowerbound}
Let $\alpha \in [0,1/2]$ be a fixed constant.  For any $d \geq \frac{k}{\alpha}$, let $r_1,...,r _N \sim \{0,1\}^{\frac{k}{\alpha}}$ be a set of points such that there are an equal number on each corner of the $\frac{k}{\alpha}$ dimensional hypercube.  Then let the dataset $X_1,...,X_N \in \R^d$ be defined such that $X_i := (r_i, 0^{d - k/\alpha})$ for all $i \in [N]$ where each datapoint $X_i$ is formed by padding the end of its corresponding $r_i$ vector with zeros. Then there exists a list $L = \{P_i\}_{i=1}^q$ of projection matrices $P_i$ onto $k$ dimensional subspaces where each $P_i \in L$ contains at least $\alpha N$ points; any pair of projections $P_i,P_j \in L$, satisfies $\norm{P_i - P_j}_F \geq \sqrt{2\epsilon k}$ for a constant $\epsilon \in [0,1/2]$; the length of $|L| = q$ is greater than $(\frac{1}{\alpha2^{H(\epsilon)}})^k$ where $H(\epsilon)$ is the binary entropy function.  Thus for $\epsilon < \frac{1}{2}$ there exists a $k$ such that no list decoding algorithm can succeed with a fixed polynomial list length.      
\end{lemma}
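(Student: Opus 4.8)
The plan is to realize $L$ as a family of \emph{coordinate} projections indexed by a constant-weight code, and to read off its length from the Gilbert--Varshamov bound. For a size-$k$ subset $S\subseteq[k/\alpha]$ write $P_S$ for the orthogonal projection of $\R^d$ onto $\Span\{e_i:i\in S\}$ (the last $d-k/\alpha$ coordinates are irrelevant). Two elementary observations make coordinate subspaces convenient. First, because the dataset places the same number of points on every corner of the $(k/\alpha)$-cube, the number of data points that $P_S$ retains is the common corner-multiplicity times the number of occupied corners supported on $S$; the construction is arranged so that this is at least $\alpha N$, so each $P_S$ is a bona fide ``planted subspace'' explanation of the data (the remaining $(1-\alpha)N$ points being supplied by the adversary), and hence any correct algorithm must place some element of its output list near every such $P_S$. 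Second, for coordinate projections $\norm{P_S-P_{S'}}_F^2=\Tr P_S+\Tr P_{S'}-2\Tr(P_SP_{S'})=2\big(k-|S\cap S'|\big)=|S\symdiff S'|$, so $\norm{P_S-P_{S'}}_F\ge\sqrt{2\epsilon k}$ exactly when $|S\cap S'|\le(1-\epsilon)k$.

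It therefore suffices to build a large family $\cS$ of size-$k$ subsets of $[k/\alpha]$ with pairwise intersection at most $(1-\epsilon)k$ --- i.e.\ a binary constant-weight code of length $k/\alpha$, weight $k$, and minimum distance $2\epsilon k$. I would take $\cS$ maximal and lower bound it greedily: every size-$k$ subset lies within intersection-distance of some member, so $|\cS|\ge\binom{k/\alpha}{k}\big/B$, where $B=\sum_{i\le\epsilon k}\binom{k}{i}\binom{k/\alpha-k}{i}$ counts the weight-$k$ words near a fixed one. Using $\binom{k/\alpha}{k}\ge(1/\alpha)^k$ and bounding $B$ through $\sum_{i\le\epsilon k}\binom{k}{i}\le 2^{kH(\epsilon)}$, a standard Gilbert--Varshamov estimate gives $|\cS|$ of the claimed order $\big(\tfrac{1}{\alpha 2^{H(\epsilon)}}\big)^k$; setting $L=\{P_S:S\in\cS\}$ then yields a list with all three stated properties.

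For the concluding sentence: the instance $\{X_i\}$ is literally identical no matter which $S\in\cS$ is declared planted, so an algorithm correct on all of these instances must output one list that simultaneously comes within the recovery radius of \pref{thm:main-alg} of $P_S$ for every $S\in\cS$; since these projections are pairwise $\ge\sqrt{2\epsilon k}$ apart, one matrix can be within $\tfrac12\sqrt{2\epsilon k}$ of at most one of them, so the list has length $\ge|\cS|$. When $\epsilon<\tfrac12$ we have $H(\epsilon)<1$ and $\alpha\le\tfrac12$, hence $\tfrac{1}{\alpha 2^{H(\epsilon)}}>1$, and taking $k$ large enough makes $q$ exceed any prescribed polynomial list length.

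The delicate step is the first observation. One needs the dataset so that \emph{every} coordinate subspace $P_S$ with $|S|=k$ honestly keeps an $\alpha$-fraction of the points, while the $S$'s still form the Gilbert--Varshamov-large, pairwise-far family above; these demands pull against each other, since a uniform measure on the whole $(k/\alpha)$-cube only leaves a $2^{-k(1/\alpha-1)}$-fraction inside $P_S$, and packing the corners supported on all $S\in\cS$ into a set of occupied corners of size near $2^k/\alpha$ forces the $S$'s to overlap heavily. Resolving this requires exploiting the slack in the corruption model --- the additive-noise allowance, i.e.\ that inliers need only be \emph{near} $P_S$ --- to absorb the deficit, and checking that the resulting recovery radius is below $\tfrac12\sqrt{2\epsilon k}$ so that the final packing argument bites is where the real bookkeeping lies; the code-theoretic core (the second observation plus Gilbert--Varshamov) is routine.
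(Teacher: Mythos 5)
Your construction is exactly the paper's: the list consists of the coordinate projections $P_S$ for $S$ ranging over a maximal weight-$k$ code in $\{0,1\}^{k/\alpha}$ of minimum distance $2\epsilon k$, the identity $\norm{P_S-P_{S'}}_F^2=|S\,\Delta\,S'|$ converts the separation requirement into the code distance, and the list length comes from the Gilbert--Varshamov count $|\cS|\ge\binom{k/\alpha}{k}/B\ge (1/\alpha)^k/2^{H(\epsilon)k}$. If anything you are more careful than the paper at the counting step: the paper declares the ball size to be ``exactly'' $\sum_{i\le\epsilon k}\binom{k}{i}$, whereas you correctly write the constant-weight ball $\sum_{i\le\epsilon k}\binom{k}{i}\binom{k/\alpha-k}{i}$ before discarding the second factors; neither derivation of $B\le 2^{H(\epsilon)k}$ is airtight as written, but this is a shared and fixable imprecision.

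The ``delicate step'' you flag --- that every $P_S$ in this large, pairwise-far family actually contains at least $\alpha N$ of the data points --- is a genuine gap, but it is the paper's gap, not one you introduced: the paper's proof reduces immediately to the code-counting problem and never verifies the containment claim at all. Your arithmetic is correct that under the stated dataset (equal mass on every corner of $\{0,1\}^{k/\alpha}$) a $k$-dimensional coordinate subspace literally contains only a $2^{-k(1/\alpha-1)}$ fraction of the points; in fact no $k$-dimensional subspace can contain more than $2^k$ of the $2^{k/\alpha}$ corners, so the containment assertion in the lemma cannot hold verbatim once $k$ is large --- which is precisely the regime the final conclusion needs. Whatever repair is intended (reinterpreting ``contains'' through the additive-noise allowance, or redistributing the dataset over the sub-hypercubes) is absent from the paper, so your inability to close that step should not be read as a failure to reproduce the intended argument; the code-theoretic core, which is all the paper actually proves, you have reproduced faithfully.
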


\begin{proof}
Let $V := \{e_1,...,e_{k/\alpha}\}$ be the first $k/\alpha$ basis vectors.  There exists $\frac{k}{\alpha} \choose k$ subspaces comprised of $k$ basis vectors from $V$.  We can encode these subspaces as vectors over $\{0,1\}^{\frac{k}{\alpha}}$ where a $1$ indicates the presence of a basis vector, and a zero indicates the absence.  Proving the theorem then reduces to proving that the maximum size of binary code $C$ of length $\frac{k}{\alpha}$ of hamming weight exactly $k$ with decoding radius $\epsilon k$ is lower bounded by $(\frac{1}{\alpha2^{H(\epsilon)}})^k$.  

The proof follows from classical arguments.  Since $C$ is of maximum size, there does not exist a boolean vector $c_x$ of hamming weight $k$ that is further than $\epsilon k$ away from its closest codeword $c \in C$.  Otherwise, it would be possible to add $c_x$ into $C$ which is a contradiction.  Since there are exactly  $\frac{k}{\alpha} \choose k$ vectors of hamming weight exactly $k$, and the hamming ball of radius $\epsilon k$ has exactly $\sum_{i=0}^{\epsilon k} {k \choose i}$ the size of $C$ is lower bounded by 

$$ |C| \geq \frac{{\frac{k}{\alpha} \choose k}}{\sum_{i=0}^{\epsilon k} {k \choose i}}$$
we lower bound the numerator by $(\frac{1}{\alpha})^k$, and upper bound the denominator by the binomial theorem $\sum_{i=0}^{\epsilon k} {k \choose i} \leq 2^{H(\epsilon)k}$.  Thus we have shown $|C| \geq (\frac{1}{\alpha2^{H(\epsilon)}})^k$ as desired.  
\end{proof}

\newpage
\bibliographystyle{amsalpha}
\bibliography{mr,dblp,scholar,bibliography,listDecoding}

\newpage
\appendix

\section{SoS-Certifiable Anticoncentration}

Here we recall the notion of SoS-certifiable anti-concentration and the relevant results from \cite{RY19}.

\begin{definition}\label{def:anticoncentration} 
Let $D: [0,1/2] \to \N$.
A probability distribution $\cD$ over $\R^d$ is said to $(c, D(\eta))$-SoS-anticoncentrated, 
If for any $0 < \eta < \frac{1}{2}$ there exists $\tau \leq c\eta$ and there exists a constant $k \in \N$ such that for all $N > d^k$,
with probability $1-d^{-k}$, over samples $x_1,\ldots, x_N \sim \cD$ the following polynomial system

$$
\cP = \left\{
        \begin{array}{ll}
            w_i^2 = w_i & i \in [N]\\
    \norm{v}^2 \leq \rho^2 \\
        \end{array}
    \right.
$$
yields a degree $D(\eta)$ SoS proof of the following inequality 

\begin{align*}
    \cP\SoSp{D(\eta)}\Big\{\frac{1}{N}\sum_{i = 1}^N w_i \iprod{X_i,v}^2 \geq \eta^2 (\frac{1}{N}\sum_i w_i) \norm{v}^2 - \eta^2 \tau \rho^2\Big\}
\end{align*}
\end{definition}

\begin{theorem}\label{thm:anti-sufficient} (Sufficient conditions for SoS anti-concentration)
If the degree $D(\eta)$ empirical moments of $\cD$ converge to the corresponding true moments $M_t$ of $\cD$, that is for all $t \leq D(\eta)$ 
$$
\Norm{\frac{1}{N}\sum_{i=1}^N X_i^{\otimes \frac{t}{2}}(X_i^{\otimes \frac{t}{2}})^T - M_t}_F \leq d^{-k} 
$$
And if there exists a uni-variate polynomial $I_{\eta}(z) \in \R[z]$ of degree at most $D(\eta)$ such that 
\begin{enumerate}
    \item $I_{\eta}(z) \geq 1-\frac{z^2}{\eta^2\rho^2}$ for all $z \in \R$.
    \item $\cP\SoSp{D(\eta)} \Big\{\norm{v}^2 \cdot \E_{x \in \cD}[I_{\eta}(\iprod{v, x})] \leq c\eta\rho^2\Big\}$.
\end{enumerate}
Then $\cD$ is $(c,D(\eta))$ certifiably anticoncentrated.

\end{theorem}

\begin{lemma}\torestate{\label{lem:normal-anti}
For every $d \in \N$, the standard Gaussian distribution $\cN(0,I_d)$ is $(c,O(\frac{1}{\eta^4}))$-SoS-anticoncentrated. In particular there exists a construction for $c \leq 2\sqrt{e}$}    
\end{lemma}

\section{Sum-of-Squares Toolkit}
Here we present some useful inequalities captured by the sum of squares proof system

\paragraph{Useful Inequalities}

\begin{fact} (Cauchy Schwarz)
Let $x_1,..,x_n,y_1,...,y_n$ be indeterminates, than 

$$\sststile{4}{} \big(\sum_{i \leq n}x_iy_i\big)^2 \leq \big(\sum_{i \leq n}x_i^2\big)\big(\sum_{i \leq n}y_i^2\big) $$
\end{fact}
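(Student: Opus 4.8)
The plan is to invoke the classical Lagrange identity, which exhibits the Cauchy--Schwarz gap polynomial as a manifest sum of squares of quadratics. Concretely, set
\[ Q(x,y) \;\defeq\; \sum_{1 \le i < j \le n} (x_i y_j - x_j y_i)^2 . \]
Each summand is the square of a polynomial of degree $2$, so $Q$ comes with a degree-$4$ sum-of-squares certificate by construction.

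Next I would verify the polynomial identity
\[ \Big(\sum_{i \le n} x_i^2\Big)\Big(\sum_{i \le n} y_i^2\Big) - \Big(\sum_{i \le n} x_i y_i\Big)^2 \;=\; Q(x,y) \]
by direct expansion. Writing both products as double sums over $(i,j) \in [n]^2$, the left-hand side equals $\sum_{i,j}\big(x_i^2 y_j^2 - x_i y_i x_j y_j\big)$. The diagonal terms $i = j$ cancel, and pairing the ordered index pair $(i,j)$ with $(j,i)$ for $i \ne j$ contributes $x_i^2 y_j^2 + x_j^2 y_i^2 - 2 x_i y_i x_j y_j = (x_i y_j - x_j y_i)^2$; summing over unordered pairs recovers $Q(x,y)$ exactly.

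Combining the two displays, $(\sum_i x_i^2)(\sum_i y_i^2) - (\sum_i x_i y_i)^2$ is literally equal to a sum of squares of degree-$2$ polynomials, which is precisely the meaning of $\sststile{4}{}\{(\sum_i x_i y_i)^2 \le (\sum_i x_i^2)(\sum_i y_i^2)\}$. There is no genuine obstacle here; the only points to be careful about are the bookkeeping in the expansion and confirming that squaring the quadratics $x_i y_j - x_j y_i$ keeps the proof degree at $4$ rather than higher.
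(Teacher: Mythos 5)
Your proof is correct: the Lagrange identity $\bigl(\sum_i x_i^2\bigr)\bigl(\sum_i y_i^2\bigr) - \bigl(\sum_i x_iy_i\bigr)^2 = \sum_{i<j}(x_iy_j - x_jy_i)^2$ is an exact polynomial identity exhibiting the gap as a sum of squares of quadratics, which is precisely a degree-$4$ SoS certificate. The paper states this fact in its toolkit without proof, and your argument is the standard one it implicitly relies on.
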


\begin{fact} (Triangle Inequality) 
Let $x,y$ be $n$-length vectors of indeterminates, then 
$$\sststile{2}{} \norm{x + y}^2 \leq 2\norm{x}^2 + 2\norm{y}^2 $$
\end{fact}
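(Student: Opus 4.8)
The plan is to exhibit the degree-$2$ sum-of-squares certificate explicitly, by rewriting the gap polynomial $2\norm{x}^2 + 2\norm{y}^2 - \norm{x+y}^2$ as a manifest sum of squares. Writing $x = (x_1,\ldots,x_n)$ and $y = (y_1,\ldots,y_n)$, I would first expand both sides coordinatewise: $\norm{x+y}^2 = \sum_{i=1}^n (x_i+y_i)^2$ and $2\norm{x}^2 + 2\norm{y}^2 = \sum_{i=1}^n (2x_i^2 + 2y_i^2)$.

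Next I would verify the elementary coordinatewise identity $2x_i^2 + 2y_i^2 - (x_i+y_i)^2 = x_i^2 - 2x_i y_i + y_i^2 = (x_i - y_i)^2$, and sum it over $i \in [n]$ to obtain the polynomial identity
$$2\norm{x}^2 + 2\norm{y}^2 - \norm{x+y}^2 \;=\; \sum_{i=1}^n (x_i - y_i)^2 \;=\; \norm{x-y}^2 \mper$$
The right-hand side is a sum of squares of the degree-$1$ polynomials $x_i - y_i$, so each summand has degree $2$. By the definition of a sum-of-squares proof, this identity is exactly a degree-$2$ SoS derivation (with no inequality constraints invoked), and it certifies $\norm{x+y}^2 \le 2\norm{x}^2 + 2\norm{y}^2$ for all real assignments of the indeterminates, i.e. $\sststile{2}{} \norm{x+y}^2 \leq 2\norm{x}^2 + 2\norm{y}^2$.

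There is essentially no obstacle: the statement is just the polarization/parallelogram identity $\norm{x+y}^2 + \norm{x-y}^2 = 2\norm{x}^2 + 2\norm{y}^2$ rearranged, and the only point requiring a (trivial) check is that the certificate $\norm{x-y}^2$ has the claimed degree, which it plainly does. For completeness I might remark that the same argument gives the weighted variant $\sststile{2}{} \norm{x+y}^2 \le (1+t)\norm{x}^2 + (1 + 1/t)\norm{y}^2$ for any fixed $t > 0$, via the certificate $\sum_{i=1}^n \big(\sqrt{t}\,x_i - y_i/\sqrt{t}\big)^2$; the statement as given is the case $t = 1$.
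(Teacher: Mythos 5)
Your certificate is correct: the identity $2\norm{x}^2 + 2\norm{y}^2 - \norm{x+y}^2 = \norm{x-y}^2$ exhibits the gap as a single sum of squares of degree-$1$ polynomials, which is exactly a degree-$2$ SoS proof. The paper states this fact without proof in its toolkit section, and your argument (including the weighted variant) is the standard certificate one would supply, so there is nothing to reconcile.
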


\begin{fact}\label{fact:pseudocauchy}(Pseudoexpectation Cauchy Schwarz). 
Let $f(x)$ and $g(x)$ be degree at most $\ell \leq \frac{D}{2}$ polynomial in indeterminate $x$, then $$\pE[f(x)g(x)]^2 \leq \pE[f(x)^2]\pE[g(x)^2]$$  
\end{fact}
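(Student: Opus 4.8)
The plan is to mimic the classical proof of the Cauchy--Schwarz inequality, replacing the expectation by the pseudoexpectation functional $\pE$ and using only its defining properties (linearity and nonnegativity on squares of polynomials of degree at most $D$). Concretely, I would introduce a real parameter $t \in \R$ and consider the polynomial $h_t(x) \defeq \big(f(x) - t\,g(x)\big)^2$. Since $\deg f, \deg g \leq \ell \leq D/2$, we have $\deg h_t \leq 2\ell \leq D$, so $h_t$ lies in the domain $\R[x]_{\leq D}$ on which $\pE$ is defined, and the ``non-negativity on square polynomials'' axiom gives $\pE[h_t] \geq 0$.

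Next I would expand using linearity of $\pE$:
\[
0 \;\leq\; \pE\big[(f - t g)^2\big] \;=\; \pE[f^2] \;-\; 2t\,\pE[fg] \;+\; t^2\,\pE[g^2] \qquad \text{for all } t \in \R.
\]
Write $A \defeq \pE[g^2] \ge 0$, $B \defeq \pE[fg]$, $C \defeq \pE[f^2] \ge 0$, so that $A t^2 - 2Bt + C \ge 0$ for every real $t$. If $A > 0$, this is a genuine quadratic in $t$ that is nonnegative on all of $\R$, so its discriminant must be nonpositive: $4B^2 - 4AC \leq 0$, i.e. $\pE[fg]^2 \leq \pE[f^2]\,\pE[g^2]$, as claimed. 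If $A = 0$, the expression reduces to the affine function $-2Bt + C \geq 0$ for all $t \in \R$, which forces $B = 0$; then the desired inequality reads $0 \leq 0$ and holds trivially. In either case the inequality follows.

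There is essentially no substantive obstacle here: the only points requiring care are the degree bookkeeping (checking $\deg(f - tg)^2 \le D$ so that the square-nonnegativity axiom applies, which is exactly where the hypothesis $\ell \le D/2$ is used) and the separate treatment of the degenerate case $\pE[g^2] = 0$, where the quadratic-in-$t$ argument degenerates to an affine one. Neither step involves any real calculation beyond expanding a square and reading off a discriminant.
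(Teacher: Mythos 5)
Your proof is correct. The paper states this as a standard toolkit fact without giving a proof, and your argument is exactly the standard one: the degree bookkeeping ($\deg\bigl((f-tg)^2\bigr) \le 2\ell \le D$, so the square-nonnegativity axiom of $\pE$ applies), the expansion by linearity, the discriminant argument, and the separate treatment of the degenerate case $\pE[g^2]=0$ are all handled correctly, so there is nothing to add or compare against.
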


\begin{fact} (Spectral Bounds)
Let $A \in \R^{d \times d}$ be a positive semidefinite matrix with $\lambda_{max}$ and $\lambda_{min}$ being the  largest and smallest eigenvalues of $A$ respectively. Let $\pE$ be a pseudoexpectation with degree greater than or equal to $2$ over indeterminates $v = (v_1,...,v_d)$.  Then we have 
$$\sststile{2}{} \iprod{A,vv^T} \leq \lambda_{max} \norm{v}^2 $$
and 
$$\sststile{2}{} \iprod{A,vv^T} \geq \lambda_{min} \norm{v}^2 $$
\end{fact}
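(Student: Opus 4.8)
The plan is to reduce both inequalities to the single observation that for any positive semidefinite matrix $M$, the quadratic form $v^\top M v$ is literally a sum of squares of linear forms in $v$, hence admits a degree-$2$ SoS proof of nonnegativity. First I would record the elementary polynomial identities $\iprod{A, vv^\top} = v^\top A v$ and $\norm{v}^2 = v^\top v$, which hold as formal identities in $v_1,\dots,v_d$ and so transfer verbatim to any pseudoexpectation. Since $A$ is PSD it is in particular symmetric, so by the spectral theorem we may write $A = \sum_{i=1}^d \lambda_i u_i u_i^\top$ for an orthonormal eigenbasis $u_1,\dots,u_d$ with eigenvalues $\lambda_i \in [\lambda_{\min}, \lambda_{\max}]$; note also $v^\top v = \sum_i \iprod{u_i, v}^2$ because $\{u_i\}$ is an orthonormal basis.

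For the upper bound, consider the matrix $\lambda_{\max}\Id - A = \sum_{i=1}^d (\lambda_{\max} - \lambda_i) u_i u_i^\top$, which is PSD because every coefficient $\lambda_{\max} - \lambda_i \geq 0$. Expanding the associated quadratic form yields the polynomial identity
\[ \lambda_{\max}\norm{v}^2 - \iprod{A, vv^\top} = \sum_{i=1}^d (\lambda_{\max} - \lambda_i)\,\iprod{u_i, v}^2, \]
whose right-hand side is manifestly a sum of squares of linear forms (namely $\sqrt{\lambda_{\max} - \lambda_i}\,\iprod{u_i,v}$ squared). By definition this is a degree-$2$ SoS proof of $\iprod{A, vv^\top} \le \lambda_{\max}\norm{v}^2$. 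For the lower bound I would run the symmetric argument with $A - \lambda_{\min}\Id = \sum_i (\lambda_i - \lambda_{\min}) u_i u_i^\top$, again PSD, giving
\[ \iprod{A, vv^\top} - \lambda_{\min}\norm{v}^2 = \sum_{i=1}^d (\lambda_i - \lambda_{\min})\,\iprod{u_i,v}^2 \geq 0 \]
as a degree-$2$ SoS identity. Applying soundness of SoS (the informal soundness fact stated earlier) then transfers both inequalities to any pseudoexpectation $\pE$ of degree at least $2$.

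There is no genuine obstacle here; the only points worth care are: (i) the eigendecomposition is valid precisely because a PSD matrix is symmetric with real nonnegative eigenvalues — and in fact PSD-ness is only needed to make the lower bound non-vacuous, since the upper bound holds for any symmetric $A$; and (ii) every manipulation above is a bona fide polynomial identity in the formal variables $v$, so no side constraints are invoked and the turnstile carries an empty hypothesis set. One could equivalently take a Cholesky-type factorization $\lambda_{\max}\Id - A = B^\top B$ and write the gap as $\norm{Bv}^2$, which exposes the sum-of-squares structure even more directly; I would use whichever presentation is notationally lighter in context.
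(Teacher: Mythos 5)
Your proof is correct. The paper states this Spectral Bounds fact in its Sum-of-Squares Toolkit appendix without any proof, so there is nothing to compare against; your argument is the standard one. Diagonalizing $A$ and writing $\lambda_{\max}\norm{v}^2 - \iprod{A,vv^\top} = \sum_{i}(\lambda_{\max}-\lambda_i)\iprod{u_i,v}^2$ (and symmetrically for $\lambda_{\min}$) exhibits each gap as an explicit sum of squares of linear forms, which is exactly a degree-$2$ SoS certificate with empty hypothesis set, and soundness then gives the pseudoexpectation statement. Your side remark is essentially right but could be sharpened: symmetry alone suffices for \emph{both} bounds (PSD-ness just guarantees $\lambda_{\min}\ge 0$, so the lower bound is a nonnegativity statement rather than merely a bound by a possibly negative multiple of $\norm{v}^2$); this does not affect the validity of the proof.
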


\section{Omitted Proofs}

\subsection{Proof of \pref{thm:main-alg}}
\begin{theorem}
Suppose $\{ X^*_{i} \}_{i \in [N]}$ are drawn from $N(0,\Id_d)$ and let $P$ be a projection to a $k$-dimensional subspace of $\R^d$.
Let $\{X_i | i \in [N]\}$ be generated by setting 
$$ X_i = P X_i^* + \gamma_i$$
for some additive noise $\gamma_i$ satisfying $$ \sum_{i = 1}^N \norm{\gamma_i}^2 = \epsilon N $$
Let $\{\tilde{X}_i | i \in [N]\}$ be a set of points such that there exists a subset $\cS \in [N]$ of size $|\cS| \geq \alpha N$ with $\tilde{X}_i = X_i$ for all $i \in \cS$.
For all $\eta > 0$, there is an algorithm running in time $d^{O(1/\eta^4)}$ that computes a list of $O(1/\alpha)$ projection matrices $\{\Pi_1,\ldots,\Pi_{\ell}\}$ such that 
\[ \min_{j} \norm{P - \Pi_{j}}^2 \leq O\left(\frac{\epsilon}{\eta^2 \alpha^5} + \frac{4ck\eta}{\alpha^5} \right)\]

\end{theorem}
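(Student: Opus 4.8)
The plan is to derive \pref{thm:main-alg} by feeding the certifiable-resilience guarantee of \pref{thm:subspace-resilience} into the generic list-decoding algorithm of \pref{thm:framework-algo}. First I would pin down the anticoncentration degree: by \pref{lem:normal-anti}, $N(0,\Id_d)$ is $(c,D(\eta))$-SoS-anticoncentrated with $D(\eta)=O(1/\eta^4)$ and $c\le 2\sqrt e$, and it is exactly the sample size required for the empirical moments of $N(0,\Id_d)$ to concentrate (\pref{thm:anti-sufficient}) that forces the bound $N>d^{O(1/\eta^4)}$ in the statement. Since \pref{thm:framework-algo} asks for $(\alpha^2,\epsilon,\delta)$-resilience, I would apply \pref{thm:subspace-resilience} with its resilience parameter instantiated to $\alpha^2$: with probability $1-d^{-\Omega(1)}$ over the draw of $\{X_1,\dots,X_N\}$, the clean polynomial system \eqref{eq:polysystemsubspace} admits a degree $D(\eta)+4 = O(1/\eta^4)$ SoS certificate of $(\alpha^2,\epsilon,\delta)$-resilience with $\delta = O\!\bigl(\tfrac{\epsilon}{\eta^2\alpha^2}+\tfrac{ck\eta}{\alpha^2}\bigr)$, the true parameter being the projection $P$ itself.

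Next I would run \pref{algo:listDecoding} on the corrupted input $\{\tilde X_i\}_{i\in[N]}$ with the system \eqref{eq:polysystemsubspace} at SoS degree $D(\eta)+5$. By \pref{thm:framework-algo}, one execution runs in time $O\!\bigl(N d^{O(D(\eta))}\bigr) = d^{O(1/\eta^4)}$ and, with probability $\Omega(\alpha)$, returns a matrix $\Pi = \pE_\zeta[w_i\Pi]/\pE_\zeta[w_i]$ (for a sampled index $i$) with $\norm{\Pi-P}_F^2 \le O(\delta/\alpha^4)$; moreover $0\preceq\Pi\preceq\Id$ and $\Tr[\Pi]=k$, which follow from the defining constraints $\Pi=UU^T$, $U^TU = \Id_k$ by a degree-$2$ SoS argument together with $w_i^2=w_i$. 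Independently repeating this execution $\Theta(1/\alpha)$ times and collecting the outputs yields a list $\{\Pi_1,\dots,\Pi_\ell\}$ of length $\ell=O(1/\alpha)$ that, with constant probability --- boosted to $1-d^{-\Omega(1)}$ at the cost of an extra $\log d$ factor in $\ell$ that is absorbed into the $O(\cdot)$ --- contains a member satisfying the above Frobenius bound to $P$.

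Finally, each $\Pi_j$ must be turned into a genuine rank-$k$ projection. For the good index $j$ I would replace $\Pi_j$ by the projection $\widehat\Pi_j$ onto its top-$k$ eigenspace and invoke \pref{lem:eigenspace}; to apply it I first translate the Frobenius bound $\norm{\Pi_j-P}_F^2\le\beta$ into the inner-product form $\langle\Pi_j,P\rangle\ge k(1-\varepsilon)$ using $\norm{P}_F^2=k$, $\Tr[\Pi_j]=k$ and $0\preceq\Pi_j\preceq\Id$, after which \pref{lem:eigenspace} yields $\norm{P-\widehat\Pi_j}_F^2 = O(\beta)$ up to an absolute constant. Chaining $\beta = O(\delta/\alpha^4)$ with the value of $\delta$ above and tracking the $\alpha$-dependence then gives $\min_j \norm{P-\widehat\Pi_j}^2 = O\!\bigl(\tfrac{\epsilon}{\eta^2\alpha^5}+\tfrac{ck\eta}{\alpha^5}\bigr)$ (renaming $\widehat\Pi_j$ back to $\Pi_j$). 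The substantive work --- SoS-certifiable anticoncentration and the maximum-entropy/Frobenius-minimization rounding --- is already packaged inside \pref{thm:subspace-resilience} and \pref{thm:framework-algo}, so the only steps demanding real care here are bookkeeping: matching the resilience-parameter convention ($\alpha$ versus $\alpha^2$), controlling the failure probability when amplifying a single $\Omega(\alpha)$-success run into a list, and checking that the eigenspace rounding of \pref{lem:eigenspace} does not inflate the error beyond the claimed $\alpha^{-5}$ dependence; I expect that last verification to be the fiddliest point.
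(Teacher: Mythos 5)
Your proposal follows exactly the paper's route: certify resilience via \pref{thm:subspace-resilience}, feed it into the generic algorithm of \pref{thm:framework-algo}, round the output matrix to a genuine projection with \pref{lem:eigenspace}, and read off the degree $D(\eta)=O(1/\eta^4)$ from \pref{lem:normal-anti}. So in substance the approach is the same and correct.

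One concrete bookkeeping point does not close in your write-up, and it is precisely the one you flagged as fiddly. You (correctly) observe that \pref{thm:framework-algo} is stated for $(\alpha^2,\epsilon,\delta)$-resilience and therefore instantiate \pref{thm:subspace-resilience} with resilience parameter $\alpha^2$, which gives $\delta = O\bigl(\tfrac{\epsilon}{\eta^2\alpha^2}+\tfrac{ck\eta}{\alpha^2}\bigr)$ because the $\Psum$ constraint is used once to convert $N$ into $\tfrac{1}{\alpha^2}\sum_i w_i$. But then the framework's guarantee $\norm{P-\Pi}_F^2 \le O(\delta/\alpha^4)$ yields $O\bigl(\tfrac{\epsilon}{\eta^2\alpha^6}+\tfrac{ck\eta}{\alpha^6}\bigr)$, i.e.\ an $\alpha^{-6}$ dependence, not the $\alpha^{-5}$ you assert in your final line. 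The paper instead invokes the $(\alpha,\epsilon,\delta)$ certificate with $\delta \propto 1/\alpha$ and divides by $\alpha^4$ to get $\alpha^{-5}$; to make that legitimate one must check that the pseudoexpectation $\starE$ constructed in the proof of \pref{thm:framework-algo} actually satisfies the $\Psum$ constraint at level $\alpha N$ rather than only $\alpha^2 N$ (the Frobenius-minimization lemma only guarantees the latter). So either state and prove the final bound with $\alpha^{-6}$, or supply the missing argument for why the $(\alpha,\epsilon,\delta)$ certificate applies; as written, your chain of inequalities does not produce the exponent you claim.
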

\begin{proof}
The theorem is a consequence of applying the framework from \pref{sec:framework} to the polynomial system \eqref{eq:polysystemsubspace}.
Specifically, in \pref{thm:subspace-resilience} we show that for any $(c,D(\eta))$-SoS anticoncentrated distribution, the data set $\{X_1,\ldots,X_N\}$ is SoS-certifiably $(\alpha,\epsilon,\delta)$-resilient for $\delta = \left(\frac{4\epsilon}{\eta^2 \alpha} + \frac{4ck\eta}{\alpha} \right)$ and degree $D(\eta)+ 4$.
Using the algorithm in \pref{thm:framework-algo}, we can recover a matrix $\Pi$ such that $\norm{P-\Pi}_F^2 = \delta/\alpha^4$.
While $\Pi$ satisfies $\Pi \preceq \Id$ and $\Tr(\Pi) = k$, it is not necessarily a projection matrix.  In particular, $\Pi$ can have eigenvalues that are not $0$ or $1$.
However, we prove in \pref{lem:eigenspace} the matrix $\Pi$ can be rounded to a projection matrix with only a constant loss in the squared Frobenius norm $\norm{P-\Pi}_F^2$.
Thus one can recover a subspace $\Pi$ such that $\norm{P-\Pi}_F^2 \leq O\left( \frac{\epsilon}{\eta^2 \alpha^5} + \frac{4ck\eta}{\alpha^5} \right)$.
The running time of the algorithm is $d^{O(D(\eta))}$.  By \pref{lem:normal-anti}, the Gaussian distribution is $(2 \sqrt{e}, O\left(\frac{1}{\eta^4}\right))$-SoS anticoncentrated, thus giving a runtime of $d^{O(1/\eta^4)}$.

\end{proof}

\subsection{Proof of \pref{thm:framework-algo}}

\begin{proof}
For sake of succinctness, we will denote 
\[ \pE_i \defeq \pE[ \cdot | w_i = 1] \qquad 
\text{ and }
 \beta_i \defeq \pE[w_i] \mper \]


Define the pseudoexpectation operator $\starE$ by modifying $\pE$ to make $w_j = 0$ for all $j \notin \cS$.  In particular, for any $j \notin \cS$, pseudoexpectation of all monomials containing $w_j$ is set to $0$.  Formally, for every monomial we set,
It is easy to check that $\starE[w_i^2] = \starE[w_i]$, and that $\starE$ satisfies $\starE[q(\Theta)] = 0$ for all $q \in \cV$.  Finally, the cost of the $\starE$ on true data $\{Z_1,\ldots,Z_N\}$ is not higher than the cost of $\pE$.  This is because,
\begin{align}\label{eq:starpcost} \starE\left[ \sum_{j \in [N]}  w_j \Phi(Z_j,\Theta)\right]  = \sum_{j \in \cS} \pE[w_j \Phi(Z_j, \Theta)]  
\leq \sum_{j \in [N]} \pE[w_j \Phi(\tilde{Z}_j, \Theta)]  \leq \epsilon N \end{align}
where we used the fact that $\pE[w_j \Phi(\tilde{Z}_j,\Theta)] \geq 0$ for all $j \in [N]$ and the $\Pcost$ constraint on the corrupted data,

Notice that the rounding algorithm outputs $\pE_i[\Theta]$ with probability proportional to $\beta_i = \pE[w_i]$.  
By \pref{lem:FrobeniusMinimization}, we have that  that,
\[ \sum_{i \in \cS} \beta_i \geq \pE_{\zeta}[\sum_{i \in \cS} w_i] \geq \alpha^2 N\]
while $\sum_{i \in [N]} \beta_i \leq N$. %
Therefore with probability at least $\alpha^2$, the rounding algorithm picks $i \in \cS$.  
Conditioned on picking an element $i \in \cS$, the expected distance $\norm{\Theta - \Theta^*}^2$ satisfies,
\[
\begin{array}{lr}
\left(\E_{i \sim \cS}[ \norm{\pE_i[\Theta] - \Theta^*}]\right)^2 & \\
 = \left(\E_{i \sim \cS}[ \norm{\starE_i[\Theta] - \Theta^*}]\right)^2  & \text{ Definition of } \starE \\
  = \left( \frac{1}{\sum_{i \in \cS} \beta_i} \sum_{i \in \cS} \beta_i \norm{\starE_i[\Theta -\Theta^*]} \right)^2 & \text{Definition of } \E_{i \in \cS} \\
  = \left( \frac{1}{\sum_{i \in \cS} \beta_i} \right)^2 \cdot \left(\sum_{i \in \cS} \norm{\starE[w_i (\Theta - \Theta^*)]} \right)^2 & \text{ using } \pE_i [ \Theta] = \frac{\pE[w_i \Theta]}{\pE[w_i]}\\
  \leq \left( \frac{1}{\sum_{i \in \cS} \beta_i} \right)^2 \cdot |\cS| \cdot \sum_{i \in \cS} \norm{\starE[w_i (\Theta - \Theta^*)]}^2 & \text{ using Cauchy-Schwartz} \\
 \leq \left( \frac{1}{\sum_{i \in \cS} \beta_i} \right)^2 \cdot |\cS| \cdot \sum_{i \in \cS} \starE \left[\norm{w_i (\Theta - \Theta^*)}^2\right] & \text{ using pseudo-expectation Cauchy-Schwartz} \\
 = \left( \frac{1}{\sum_{i \in \cS} \beta_i} \right)^2 \cdot | \cS | \cdot \starE\left[ \left(\sum_{i \in \cS} w_i\right) \norm{\Theta - \Theta^*}^2 \right] & \text{using } w_i^2 - w_i = 0\\
 = \left( \frac{1}{\sum_{i \in \cS} \beta_i} \right)^2 \cdot | \cS | \cdot \starE\left[ \left(\sum_{i \in [N]} w_i\right) \norm{\Theta - \Theta^*}^2 \right] & \text{using } \starE[w_i] = 0 \text{ for } i \notin \cS\\
 \leq \frac{\delta}{\alpha^4}
\end{array}
\]

Finally, note that $\starE$ is a valid degree $D$ pseudo-expectation functional that satisfies the constraints $\Pbool(w_i)$ and $\Pcost$ (\eqref{eq:starpcost}) on the original data $\{Z_1,\ldots,Z_N\}$.
Since the uncorrupted data $\{Z_1,\ldots,Z_N\}$ is $(\alpha,\epsilon,\delta)$-resilient and admits a degree $D$ SoS certificate, we can conclude that
\[ \starE\left[ \left(\sum_{i \in [N]} w_i\right) \norm{\Theta - \Theta^*}^2 \right] \leq \delta N\]
Along with the fact that $\sum_i \beta_i \geq \alpha^2 N$, the above calculation implies that 
\begin{align}
\left(\E_{i \sim \cS}[ \norm{\pE_i[\Theta] - \Theta^*}]\right)^2 \leq \frac{\delta}{\alpha^4} 
\end{align}
\end{proof}

\subsection{Proof of \pref{lem:mylemma1}}

\begin{proof}
For notational convenience, denote $A \defeq \sum_i w_i^2$.  Note that
\begin{align*}
2 P \Pi P = P^2 + (P\Pi P)^2 - (P-P\Pi P)^2 \mper
\end{align*}
Thus 
\begin{align*}
 2 A \cdot \Tr[P \Pi P] & = A  \Tr[P^2] + A\Tr[(P\Pi P)^2] - A  \Tr[(P-P\Pi P)^2]\\
& = A k + A k - A \Tr[(P-P\Pi P)^2] \qquad (\text{\pref{lem:ppip}}) \\
& = A k + A k - A k\gamma \qquad (\text{\pref{lem:ppip}}) \\
& = k (2-\gamma) \cdot A
\end{align*}
Finally, we have
\begin{align*}
A \cdot \norm{P-\Pi}_F^2 = A (\Tr[P^2] + \Tr[\Pi^2] - 2 \Tr(P \Pi)) \\
= A \left(2k - 2 \Tr(P\Pi P)\right)
\leq k \gamma \cdot A
\end{align*}

\end{proof}

\begin{lemma} \label{lem:ppip}
Suppose $U$ is a $d \times k$ matrix of indeterminates satisfying $U^T U = \Id_k$.  Suppose $\Pi = UU^T$ and $P \in \R^{d \times d}$ is a projection matrix then, 

\begin{align}
    \begin{array}{c}
         U^T U = \Id_k \\
         \end{array} \SoSp{2} \Tr[(P\Pi P)^2] \leq k
\end{align}
\end{lemma}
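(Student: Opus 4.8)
The plan is to use that $P$ is a \emph{constant} projection matrix, so that $P\Pi P$, although a degree‑two polynomial in the entries of $U$, factors through a linear change of variables. Set $W \defeq PU$, whose entries are linear forms in $U$. Using only $P = P^\top$ and $P^2 = P$ (facts about the constant matrix $P$, which need no constraint), one has $P\Pi P = P UU^\top P = WW^\top$, hence by cyclicity of trace
\[ \Tr[(P\Pi P)^2] = \Tr[(WW^\top)^2] = \Tr[(W^\top W)^2] = \Tr[M^2], \qquad M \defeq W^\top W = U^\top P U . \]
Here $M$ is a symmetric $k\times k$ matrix that is quadratic in $U$, so it remains to give an SoS proof of $\Tr[M^2] \le k$ from the constraint $U^\top U = \Id_k$.

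The one genuine use of the constraint is to certify $\Id_k - M \succeq 0$. Let $Y \defeq (\Id_d - P) U$, again linear in $U$. Since $\Id_d - P$ is also a projection, $Y^\top Y = U^\top (\Id_d - P) U = U^\top U - M$, which after substituting $U^\top U = \Id_k$ gives the identity
\[ \Id_k - M = Y^\top Y + \bigl(\Id_k - U^\top U\bigr) , \]
i.e.\ $\Id_k - M$ is a manifest matrix square $Y^\top Y$ plus a multiple of the (matrix) constraint.

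With this in hand I would simply expand $k - \Tr[M^2] = \Tr[(\Id_k - M)(\Id_k + M)]$ (the two factors commute), plug in the previous display for $\Id_k - M$ and $M = W^\top W$ for the other factor, and collect terms to obtain
\[ k - \Tr[(P\Pi P)^2] \;=\; \underbrace{\Tr[Y^\top Y]}_{\norm{Y}_F^2} \;+\; \underbrace{\Tr[Y^\top Y\, W^\top W]}_{\norm{WY^\top}_F^2} \;+\; \Tr\bigl[(\Id_k - U^\top U)(\Id_k + W^\top W)\bigr], \]
where $\Tr[Y^\top Y W^\top W] = \Tr[Y W^\top W Y^\top] = \norm{WY^\top}_F^2$ by cyclicity. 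The first two summands are sums of squares of polynomials in the entries of $U$, and the last is a polynomial combination $\sum_{a,b}(\Id_k + W^\top W)_{ab}\,(\Id_k - U^\top U)_{ab}$ of the scalar constraint polynomials; this is exactly a bounded‑degree SoS derivation of $\Tr[(P\Pi P)^2] \le k$ from $U^\top U = \Id_k$.

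I do not expect a real mathematical obstacle; the work is bookkeeping — keeping straight which rewrites are free polynomial identities from $P$ being a constant projection (cyclicity, $P^2=P$, $(\Id_d-P)^2=\Id_d-P$) versus actual uses of $U^\top U = \Id_k$ (there is essentially only the one above), and tracking degrees so the claimed proof degree is honest. Note that $\norm{WY^\top}_F^2$ and the constraint multiplier each have degree up to four in the entries of $U$, so the certificate produced is a degree‑four SoS proof — which is in any case the degree at which this lemma is invoked in \pref{lem:mylemma1}.
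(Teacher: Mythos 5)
Your proposal is correct and is essentially the paper's own argument in different packaging: both certificates write the deficit $k - \Tr[(U^\top P U)^2]$ as explicit Frobenius-norm squares built from $PU$ and $(\Id_d - P)U$ (the paper adds the nonnegative terms $\Tr[U^\top A U U^\top B U] = \norm{A^{1/2}UU^\top B^{1/2}}_F^2$ for $A,B \in \{P, \Id_d - P\}$ to complete the expression to $\Tr[(U^\top U)^2]$, which is exactly your $\norm{Y}_F^2 + \norm{WY^\top}_F^2$ up to where the constraint $U^\top U = \Id_k$ is applied) plus a polynomial multiple of that constraint. Your degree bookkeeping is also consistent with the paper's actual certificate, which is likewise degree four despite the $\SoSp{2}$ in the lemma statement.
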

\begin{proof}

Observe that for any positive semidefinite matrices $A,B \succeq 0$ and indeterminates $U$
\begin{align*}
    \Tr[U^T A U U^T B U] & = \Tr[ U^T A^{1/2} A^{1/2} U U^T B^{1/2} B^{1/2} U] \\
    & = \Tr[(A^{1/2} UU^T B^{1/2}) (A^{1/2} UU^T B^{1/2})^T ] \\
    & = \norm{A^{1/2} UU^T B^{1/2}}_F^2 \geq 0
\end{align*}
Now we can write,
\begin{align*}
    \Tr[(P\Pi P)^2] &= \Tr[P UU^TPUU^TP]  \\
    & = \Tr[U^T P U U^T P U] \\
    & \leq  \Tr[U^{T} P U U^T P U] + \Tr[U^{T} P U U^T (\Id-P) U] + \Tr[U^{T} (\Id-P) U U^T  U]  \\
    & = \Tr[(U^{T} U)^2] \\
    & = \Tr[\Id_k] = k  \qquad \text{using } U^TU = \Id_k
\end{align*}
\end{proof}


\end{document}